\newcommand{\R}{\mathbb{R}}
\newcommand{\diag}{{\rm diag}}
\renewcommand{\P}{{\mathbb{P}}}
\newcommand{\E}{{\mathbb{E}}}
\newcommand{\Var}{{\rm Var}}
\newcommand{\Cor}{{\rm Cor}}
\newcommand{\adj}{{\rm adj}}
\newcommand{\GOI}{{\rm GOI}}
\newcommand{\Rad}{{\rm Rad}}
\newcommand\numberthis{\addtocounter{equation}{1}\tag{\theequation}}
\newtheorem{thm}{Theorem}
\newtheorem{lemma}{Lemma}
\newtheorem{prop}{Proposition}
\newtheorem{assumption}{Assumption}
\theoremstyle{plain}
\begin{document}

\title{An approximation to peak detection power using Gaussian random field theory}

\author{Yu Zhao$^1$ \and Dan Cheng$^3$ \and Armin Schwartzman$^{1,2}$}
\date{%
    $^1$Division of Biostatistics, \\ Herbert Wertheim School of Public Health and Human Longevity Science, \\
University of California San Diego, 9500 Gilman Dr., La Jolla, CA 92093, USA \\%
    $^2$Halicioǧlu Data Science Institute, \\
University of California San Diego, 9500 Gilman Dr., La Jolla, CA 92093, USA \\%
    $^3$School of Mathematical and Statistical Sciences, \\ Arizona State University, 900 S. Palm Walk, Tempe, AZ 85281, USA \\[2ex]%
}
\maketitle

\begin{abstract}
We study power approximation formulas for peak detection using Gaussian random field theory. The approximation, based on the expected number of local maxima above the threshold $u$, $\E[M_u]$, is proved to work well under three asymptotic scenarios: small domain, large threshold, and sharp signal. An adjusted version of $\E[M_u]$ is also proposed to improve accuracy when the expected number of local maxima $\E[M_{-\infty}]$ exceeds 1. 

\citet{Bernoulli} developed explicit formulas for $\E[M_u]$ of smooth isotropic Gaussian random fields with zero mean. In this paper, these formulas are extended to allow for rotational symmetric mean functions, so that they are suitable for power calculations. We also apply our formulas to 2D and 3D simulated datasets, and the 3D data is induced by a group analysis of fMRI data from the Human Connectome Project to measure performance in a realistic setting.
\end{abstract}

\begin{keywords} 
Power calculations, peak detection, Gaussian random field, image analysis.
\end{keywords}

\section{Introduction}

Detection of peaks (local maxima) is an important topic in image analysis. For example, a fundamental goal in fMRI analysis is to identify the local hotspots of brain activity (see, for example, \Citealp{Genovese2002} and \Citealp{Heller2006}), which are typically captured by peaks in the fMRI signal. The detection of such peaks can be posed as a statistical testing problem intended to test whether the underlying signal has a peak at a given location. This is challenging because such tests are conducted only at locations of observed peaks, which depend on the data. Therefore, the height distribution of the observed peak is conditional on a peak being observed at that location. This is a nonstandard problem. Solutions exist using random field theory (RFT). RFT is a statistical framework that can be used to perform topological inference and modeling. RFT-based peak detection has been studied in \citet{Annals} and \Citet{Telschow19}, which provide the peak height distribution for isotropic noise under the complete null hypothesis of no signal anywhere.

In general, for any statistical testing problem, accurate power calculations help researchers decide the minimum sample size required for an informative test, and thus reduce cost. Power calculation formulas exist for common univariate tests, such as z-tests and t-tests. However, particular challenges arise when we perform power calculations in peak detection settings. Due to the nature of imaging data, the number and location of the signal peaks are unknown. Besides, the power is affected by other spatial aspects of the problem, such as the shape of the peak function and the spatial autocorrelation of the noise. Considering these difficulties, it requires some extra effort to derive a power formula for peak detection.

A formal definition of power in peak detection is necessary to perform power calculations. In \citet{Annals} and \citet{Durnez16}, the authors explored approaches to control the false discovery rate (FDR). For the entire domain, average peakwise power, i.e. power averaged over all non-null voxels, is a natural choice for these approaches. For a local domain where a single peak exists, the power can be defined as the probability of successfully detecting that peak. Following this idea, we describe the null and alternative hypothesis and the definition of detection power. We do so informally here for didactic purposes and present formal definitions in Section \ref{sec:power_approx}.

Consider a local domain where a single peak may exist, and consider the hypotheses
\begin{align*}
H_0 &: \text{``the signal is equal to 0 in the local domain." \quad vs} \\
H_1 &: \text{``the signal has at least one positive peak in the local domain."}
\end{align*}
 Suppose we observe a random field to be used as test statistic at every location, typically as the result of statistical modeling of the data. For a fixed threshold $u$, the existence of observed peaks with height greater than $u$ would lead to rejecting the null hypothesis. Therefore, we define the type I error and power as the probability of existing at least one local maximum above $u$ under $H_0$ and $H_1$ respectively:
\begin{align*}
   \text{Type I error: }& \mathbb{P}\{\exists \text{ a peak in the local domain with height} >  u \text{ when $H_0$ is true}\} \\
   \text{Power: }& \mathbb{P}\{\exists \text{ a peak in the local domain with height} >  u \text{ when $H_1$ is true}\}
\label{eqn:power_def}
\numberthis
\end{align*}
\noindent
Formulas for type I error have been developed for stationary fields in 1D and isotropic fields in 2D and 3D (\citealp{Extreme}, \citealp{Annals} and \citealp{Bernoulli}). However, there is no formula to calculate power. In order to get an appropriate estimate of power, we need to know the peak height distribution for non-centered (the mean function is not 0) random fields. Generally speaking, it is very difficult to calculate the peak height distribution especially when the random field has non-zero mean. \citet{Durnez16} suggests using Gaussian distribution to describe the non-null peaks and truncated Gaussian distribution to approximate the overshoot distribution. This approach is easy to implement but not very accurate because the peak height distribution is in reality always skewed and not close to any Gaussian distribution. 

In this article, we propose to approximate the probability of an observed peak exceeding the detection threshold $u$ by calculating the expected number of peaks above the threshold $u$. We show that the approximation, which is also an upper bound, works well under certain scenarios. For the entire domain, we can approximate the average peakwise power by taking the arithmetic mean of the approximation proposed in this paper over non-null voxels.

The proposed approximation makes the problem more tractable, but in general, it does not have an explicit form. In order to make it applicable in practice, we further simplify the formula under the isotropy assumption and show its explicit form in 1D, 2D, and 3D. The explicit results are validated through 2D and 3D computer simulations carried out in MATLAB. The simulation also covers multiple scenarios by modifying the parameters used to generate the data. The performance of power approximation and its conservative adjustment under these scenarios are discussed. 

Finally, to assess the real-data performance of our power approximation method, we apply it to a 3D simulation induced by a real brain imaging dataset, where the parameters are estimated from the Human Connectome Project (\citealp{HCP2012}) fMRI data. By testing the method in a realistic setting, we also demonstrate how effect size and other parameters affect the power.

The paper is organized as follows. We first show in Section \ref{sec:power_approx} the problem setup and theoretical results in certain scenarios. In Section \ref{sec:explicit}, we derive the explicit formulas under isotropy. Simulation in 2D is conducted in Section \ref{sec:sim}. Details regarding how to apply our formula in application setting are discussed in Section \ref{sec:application}. The methodology is applied to a 3D real dataset in Section \ref{sec:3d_example}. 

\section{Power Approximation}\label{sec:power_approx}

\subsection{Setup}

Let $Y(s) = \sigma(s) Z(s)+\mu(s)$ where $Z=\{Z(s),s\in D\}$ representing the noise is a centered (zero-mean) smooth unit-variance Gaussian random field on an $N$ dimensional non-empty domain $D \subset \mathbb{R}^N$, $\sigma(s)$ is the standard deviation of the noise and $\mu(s)$ is the mean function. Let $X(s) = Y(s)/\sigma(s) = Z(s) + \theta(s)$ where the ratio $\theta(s) = \mu(s)/\sigma(s)$ is the standardized mean function, which we assume to be $C^2$. Here $C^3$ is a sufficient smoothness condition for $Z$, and this will be clarified in Assumption \ref{con:1} below. 

Let
\begin{equation*}
\begin{split}
X_i(s)&=\frac{\partial X(s)}{\partial s_i}, \quad \nabla X(s)= (X_1(s), \ldots, X_N(s)),\\
X_{ij}(s)&=\frac{\partial^2 X(s)}{\partial s_is_j}, \quad \nabla^2 X(s)= (X_{ij}(s))_{1\le i, j\le N},\\
Z_i(s)&=\frac{\partial Z(s)}{\partial s_i}, \quad \nabla Z(s)= (Z_1(s), \ldots, Z_N(s)),\\
Z_{ij}(s)&=\frac{\partial^2 Z(s)}{\partial s_is_j}, \quad \nabla^2 Z(s)= (Z_{ij}(s))_{1\le i, j\le N}.
\end{split}
\end{equation*}

We will make use of the following assumptions:

\begin{assumption}\label{con:1}

$Z \in C^2(D)$ almost surely and its second derivatives satisfy the mean-square H\"older condition: for any $s_0 \in D$, there exists positive constants $L$, $\eta$ and $\delta$ such that 
\begin{equation*}
\mathbb{E}(Z_{ij}(s)-Z_{ij}(t))^2 \leq L^2 \|s-t\|^{2\eta}, \quad \forall t,s \in U_{s_0}(\delta),\: i, j = 1,...,N.
\end{equation*}
where $U_{s_0}(\delta) = s_0 \oplus (-\delta/2,\delta/2)^N$ is the $N$ dimensional open cube of side length $\delta$ centered at $s_0$. This condition is satisfied, for example, if $Z$ is $C^3(D)$.

\end{assumption}

\begin{assumption}\label{con:2}
For every pair $(t,s) \in D \times D$ with $s \neq t$, the Gaussian random vector
\begin{equation*}
(Z(s),\,\nabla Z(s),\, Z_{ij}(s),\, Z(t),\, \nabla Z(t),\, Z_{ij}(t),\, 1\leq i \leq j \leq N)
\end{equation*}
is non-degenerate, i.e. its covariance matrix has full rank.

\end{assumption}

\subsection{Peak detection}

Following the notation in the problem setup, the null and alternative hypothesis can be written as:
\begin{align*}
H_0 &: \text{$\mu(s) = 0 \: \text{for all} \: s \in D$ \quad vs} \\
H_1 &: \text{$\mu(s) > 0, \: \nabla \mu(s) = 0 \text{, }\nabla^2\mu(s) \prec 0 \: \text{for some } s \in D$}
\end{align*}
The mean function $\mu(s)$ is not directly observed, so the hypothesis is tested based on the peak height of $X(s)$. For a peak detection procedure that aims to test this hypothesis, a threshold $u$ for the peak height of $X(s)$ needs to be set in advance. If a local maximum with height greater than $u$ is observed, we would choose to reject the null hypothesis due to the strong evidence against it. The probability that a peak of $X$ exceeds $u$

 \begin{equation}
 \P{\left(\exists\: s \in D \: s.t. \: X(s)>u| \nabla X(s) = 0 \text{ and }\nabla^2X(s) \prec 0\right)}
 \label{def:power_X}
 \end{equation}
is the type I error under $H_0$ and power under $H_1$. The threshold $u$ can be obtained based on the peak height distribution under $H_0$. A formula for peak height distribution of smooth isotropic Gaussian random fields has been derived in \citet{Bernoulli} and it can also be derived directly from a special case of the formulas presented in this paper. Usually, $u$ is set to be some quantile of the null distribution of peak height to maintain the nominal $\alpha$ type I error. More details about selecting the threshold will be discussed in the real data example. Selecting $u$ is not the main focus of this paper and our method can be applied to any choice of $u$.

\subsection{Power approximation}

Let $M_u$ be the number of local maxima of the random field $X$ above $u$ over the local domain $D$.  The power defined in \eqref{def:power_X} can be represented as $\P[M_u \ge 1]$. We call this the {\em power function}, seen as a function of the threshold u. Note that 
\begin{equation}
\P[M_u \ge 1] = \sum_{k=1}^\infty \P[M_u = k] \le \sum_{k=1}^\infty k\P[M_u = k] = \E[M_u].
\end{equation}
On the other hand,
\begin{equation}
\E[M_u]-\P[M_u \ge 1] = \sum_{k=2}^\infty (k-1)\P[M_u = k] \le \frac{1}{2}\sum_{k=2}^\infty k(k-1)\P[M_u = k] = \frac{1}{2}\E[M_u(M_u-1)].
\end{equation}
Thus, we have
\begin{equation}\label{ieq:power_approx}
\E[M_u]-\frac{1}{2}\E[M_u(M_u-1)] \le \P[M_u \ge 1] \le \E[M_u].
\end{equation}
This inequality tells us that for any fixed $u$, the power is bounded within an interval of length $ \E[M_u(M_u-1)]/2$. Thus, $\E{[M_u]}$ is a good approximation of power if one of the two conditions below is satisfied:
\begin{enumerate}
\item The factorial moment $\E{[M_u(M_u-1)]}$ converges to 0 and $\E{[M_u]}$ does not.
\item They both converge to 0 and $\E{[M_u(M_u-1)]}$ converges faster than $\E{[M_u]}$.
\end{enumerate} 
The convergence above refers to conditions on the signal and noise parameters. In the rest of this section, we introduce four interesting results. The first result can be useful for simplifying the power function and the other three results give different scenarios where one of the conditions above holds. 

\subsection{Adjusted $\E[M_u]$}

We have provided evidence of using $\E[M_u]$ to approximate power through \eqref{ieq:power_approx}. However, $\E[M_u]$ alone might not be sufficient for power approximation since it only gives an upper bound. Also, unlike power, $\E[M_u]$ sometimes exceeds 1. To correct for this, we define the adjusted $\E[M_u]$ as
\begin{equation}\label{eqn:Mu_adj}
\E[M_u]_{\adj} = \E[M_u]/\max(1,\E[M_{-\infty}]).
\end{equation}
The adjusted $\E[M_u]$ is the same as $\E[M_u]$ when the expected number of local maxima $\E[M_{-\infty}]$ is less or equal to 1. When $\E[M_{-\infty}]$ is greater than 1, we divide $\E[M_u]$ by $\E[M_{-\infty}]$ to make sure it never exceeds 1. The adjusted $\E[M_u]$ is more conservative, and we conjecture that it is a lower bound of power when there exists at least one local maximum in the domain $D$. In applications, people are interested in a conservative estimator so that the test is guaranteed to have enough power. Combining $\E[M_u]$ and $\E[M_u]/\E[M_{-\infty}]$, we can get an approximate range of the true power. We will compare $\E[M_u]$ and adjusted $\E[M_u]$ in simulation studies. 

\subsection{Height equivariance}

Our first result does not concern the approximation \eqref{ieq:power_approx} yet, but it offers a simplification of the power function and $\E[M_u]$ that will be used later. The proposition below states that the power function and $\E[M_u]$ for peak detection are translation equivariant with respect to peak height.

\begin{prop}\label{prop:height_equi}
Let $\theta(s) = h(s) + \theta_0$ be a peak signal with height $\theta_0$, where $h(s)$ is a unimodal mean function with maximum equal to 0 at $s_0$ in $D$. Then the power function for peak detection and $\E[M_u]$ can be written in the form $F(u-\theta_0)$, where $F(u)$ is the power function or $\E[M_u]$ at $\theta_0 = 0$.
\end{prop}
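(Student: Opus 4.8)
The plan is to exploit the purely additive way the signal height $\theta_0$ enters the field. First I would write $X^{(0)}(s) = Z(s) + h(s)$ for the standardized field in the case $\theta_0 = 0$, so that the general field is $X(s) = X^{(0)}(s) + \theta_0$, a deterministic vertical shift of $X^{(0)}$ by the constant $\theta_0$. Crucially, the noise field $Z$ and its law do not depend on $\theta_0$, so all distributional statements below are comparisons between two fields that differ only by this constant.

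The key step is the observation that a vertical shift by a constant changes neither the locations of critical points nor the definiteness of the Hessian there: since $\nabla X(s) = \nabla X^{(0)}(s)$ and $\nabla^2 X(s) = \nabla^2 X^{(0)}(s)$ for every $s$, the point $s$ is a local maximum of $X$ (i.e. $\nabla X(s) = 0$, $\nabla^2 X(s) \prec 0$) if and only if it is a local maximum of $X^{(0)}$, and at such a point $X(s) > u$ if and only if $X^{(0)}(s) > u - \theta_0$. Consequently, on every sample path the set of local maxima of $X$ exceeding level $u$ coincides with the set of local maxima of $X^{(0)}$ exceeding level $u - \theta_0$, so $M_u = M^{(0)}_{u-\theta_0}$ as random variables, where $M^{(0)}_v$ denotes the number of local maxima of $X^{(0)}$ above $v$.

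From this pathwise identity both claims follow at once: taking probabilities gives the power function at height $\theta_0$ as $\P[M_u \ge 1] = \P[M^{(0)}_{u-\theta_0} \ge 1]$, and taking expectations gives $\E[M_u] = \E[M^{(0)}_{u-\theta_0}]$. Defining $F(u)$ to be, respectively, $\P[M^{(0)}_u \ge 1]$ or $\E[M^{(0)}_u]$ — i.e. the corresponding quantity evaluated at $\theta_0 = 0$ — yields exactly $F(u - \theta_0)$ in each case, which is the assertion.

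I do not expect a genuine analytic obstacle; the content of the proposition is the structural remark that $\theta_0$ enters $X$ additively and that the event ``local maximum above $u$'' is equivariant under constant vertical shifts. The only points I would be careful to state explicitly are that $Z$ is independent of $\theta_0$ (so the distributional identities are legitimate), that $M_u$ is a well-defined random variable under Assumptions~\ref{con:1}--\ref{con:2} (so the expectation makes sense), and that the unimodality of $h$ and the location $s_0$ of its maximum play no role in the argument — they belong to the definition of a ``peak signal'' but are irrelevant to the equivariance itself.
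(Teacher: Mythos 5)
Your proposal is correct and follows essentially the same route as the paper: both construct the height-zero field $\tilde X = Z + h$ and identify $M_u$ with $\tilde M_{u-\theta_0}$, then take probabilities and expectations. You spell out the pathwise identity (unchanged gradient and Hessian under a constant shift) more explicitly than the paper, which simply asserts the equality, but the argument is the same.
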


\begin{proof}
Let $\tilde{\theta}(s) = \theta(s) - \theta_0 = h(s) + 0$ and $\tilde{M}_u$ be the number of local maxima of the random field $\tilde{X}(s) = Z(s) + \tilde{\theta}(s)$ above $u$ over $D$. Considering the definition of power, we have
\begin{equation*}
F(u-\theta_0) = \P[\tilde{M}_{u-\theta_0} \geq 1] = \P[M_u\geq 1].
\end{equation*}
Given that $\E[\tilde{M}_{u-\theta_0} ] = \E[M_u]$, is is also straightforward to show $\E[M_u]$ is translation equivariant with respect to $\theta_0$.
\end{proof}

Next, we give three scenarios where the equality in (\ref{ieq:power_approx}) can be achieved asymptotically: small domain size, large threshold, and sharp signal. 

\subsection{Small domain}

If the size of the local domain $D$ where a single peak exists is small enough, it can be shown that equality in (\ref{ieq:power_approx}) can be achieved asymptotically. 

\begin{thm}
Consider a local domain $D_{\epsilon}= U(s_0,\epsilon)$ for any fixed $s_0 \in D$ where $U(s_0,\epsilon)= t_0 \oplus (-\epsilon/2, \epsilon/2)^N$ is the N-dimensional open cube of side $\epsilon$ centered at $t_0$. For sufficiently small $\epsilon$ and fixed threshold u,
\begin{equation}\label{eqn:small_domain}
\P[M_u \geq 1]=\E[M_u](1-o(1)) = \E[M_u]_{\adj}(1-o(1)),
\end{equation}
\label{thm:small_domain}
\end{thm}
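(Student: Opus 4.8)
The plan is to start from the sandwich inequality \eqref{ieq:power_approx}, which reduces the first equality in \eqref{eqn:small_domain} to showing that $\E[M_u(M_u-1)]/\E[M_u]\to 0$ as $\epsilon\to0$: once this holds, $\E[M_u]-\tfrac12\E[M_u(M_u-1)]=\E[M_u](1-o(1))$, and the two bounds in \eqref{ieq:power_approx} pinch $\P[M_u\ge1]$ to $\E[M_u](1-o(1))$. The second equality (with $\E[M_u]_{\adj}$) is then automatic: over the shrinking cube $D_\epsilon$ the expected number of local maxima $\E[M_{-\infty}]$ is a Kac--Rice integral over a set of volume $\epsilon^N$ and hence tends to $0$, so for all small enough $\epsilon$ we have $\max(1,\E[M_{-\infty}])=1$ and $\E[M_u]_{\adj}=\E[M_u]$ identically.

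For the ratio I would use the Kac--Rice formulas for the first two factorial moments of the local-maxima point process of $X$, valid under Assumptions \ref{con:1}--\ref{con:2}. On the one hand $\E[M_u]=\int_{D_\epsilon}\rho_1(s)\,ds$, where the intensity $\rho_1(s)=\E\big[\,|\det\nabla^2X(s)|\,\1\{X(s)>u\}\,\1\{\nabla^2X(s)\prec0\}\mid\nabla X(s)=0\,\big]\,p_{\nabla X(s)}(0)$ is continuous in $s$ and strictly positive at $s_0$ (the conditioning event has positive Gaussian probability and $\det\nabla^2X(s_0)\neq0$ almost surely under non-degeneracy), so $\E[M_u]=\rho_1(s_0)\,\epsilon^N(1+o(1))\asymp\epsilon^N$. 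On the other hand $\E[M_u(M_u-1)]=\iint_{D_\epsilon\times D_\epsilon}\rho_2(s,t)\,ds\,dt$, with a two-point intensity $\rho_2$ that is finite for $s\neq t$ by Assumption \ref{con:2} and whose diagonal singularity is integrable thanks to the mean-square Hölder condition on second derivatives in Assumption \ref{con:1} (this is exactly what makes $\E[M_u(M_u-1)]<\infty$ on a fixed reference cube $D_{\epsilon_0}$). Substituting $t=s+r$ and bounding the $r$-region by $\|r\|\le\epsilon\sqrt N$ gives, for $\epsilon\le\epsilon_0$,
\[
\E[M_u(M_u-1)]\;\le\;\epsilon^N\sup_{s\in D_{\epsilon_0}}\int_{\|r\|\le\epsilon\sqrt N}\rho_2(s,s+r)\,dr,
\]
and since $r\mapsto\rho_2(s,s+r)$ is locally integrable near $r=0$ the inner integral vanishes as $\epsilon\to0$; hence $\E[M_u(M_u-1)]=o(\epsilon^N)$ and $\E[M_u(M_u-1)]/\E[M_u]\to0$, completing the proof.

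The main obstacle is this last estimate: showing that the diagonal contribution to the second factorial moment is $o(\epsilon^N)$, not merely $o(1)$, and uniformly in $s$ near $s_0$. This needs a quantitative handle on $\rho_2(s,t)$ as $t\to s$: the joint density $p_{\nabla X(s),\nabla X(t)}(0,0)$ blows up like $\|s-t\|^{-N}$ because $\nabla X(s)$ and $\nabla X(t)$ become perfectly correlated in $N$ directions, and this must be offset by the decay of $\E\big[\,|\det\nabla^2X(s)|\,|\det\nabla^2X(t)|\mid\nabla X(s)=\nabla X(t)=0\,\big]$ as $t\to s$; the Hölder hypothesis of Assumption \ref{con:1} is the ingredient that makes the product locally integrable and that should, via a dominated-convergence argument, yield the bound uniformly over $s$ in a neighborhood of $s_0$. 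If one can instead establish that $\rho_2$ is bounded near the diagonal (a repulsion-type bound for pairs of critical points), the argument shortens to $\E[M_u(M_u-1)]=O(\epsilon^{2N})=o(\epsilon^N)$ directly.
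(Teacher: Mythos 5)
Your architecture is the same as the paper's: reduce via the sandwich inequality \eqref{ieq:power_approx} to showing $\E[M_u(M_u-1)]/\E[M_u]\to 0$, get $\E[M_u]\gtrsim \epsilon^N$ from continuity and positivity of the Kac--Rice integrand, show $\E[M_u(M_u-1)]=o(\epsilon^N)$, and dispose of the adjusted version by noting $\E[M_{-\infty}]=o(1)$ so that $\max(1,\E[M_{-\infty}])=1$ for small $\epsilon$. The problem is that the one step you explicitly defer --- quantitative control of the two-point intensity $\rho_2(s,t)$ as $t\to s$ --- is the entire substance of the theorem, and ``locally integrable by the H\"older condition, via dominated convergence'' is a description of what must be proved, not a proof. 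Note that integrability is not a soft consequence of Assumption~\ref{con:1}: the density $p_{\nabla X(s),\nabla X(t)}(0,0)$ blows up like $\|t-s\|^{-N}$, and $\int_{\|r\|\le\epsilon}\|r\|^{-N}\,dr$ already diverges, so if the conditional determinant factor were merely \emph{bounded} near the diagonal your double integral would not even be finite by this route. You need the conditional expectation of $|\det\nabla^2X(s)|\,|\det\nabla^2X(t)|$ given $\nabla X(s)=\nabla X(t)=0$ to \emph{vanish} at a definite rate $\|t-s\|^{\alpha}$, which is exactly what turns $\|t-s\|^{-N}$ into the integrable $\|t-s\|^{-(N-\alpha)}$ and produces the extra factor $\epsilon^{\alpha}$ that upgrades $O(\epsilon^N)$ to $o(\epsilon^N)$.

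The paper supplies this missing estimate by a Piterbarg-style argument: Taylor-expand $\nabla X(t)=\nabla X(s)+\nabla^2X(s)(t-s)+\|t-s\|^{1+\alpha}Y_{s,t}$, so that on the event $\nabla X(s)=\nabla X(t)=0$ one has $\nabla^2X(s)(t-s)=-\|t-s\|^{1+\alpha}Y_{s,t}$; a column manipulation of an augmented $(N+1)\times(N+1)$ determinant then factors out $\|t-s\|^{\alpha}$, i.e.\ $|\det\nabla^2X(s)|\le\|t-s\|^{\alpha}|\det A_{s,t}|$ where $A_{s,t}$ has entries whose conditional moments are bounded uniformly for $s\ne t$ by Assumptions~\ref{con:1}--\ref{con:2} and Hadamard/AM--GM-type determinant bounds. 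Without carrying out this computation (or citing Lemma~3 of Piterbarg or Lemma~4.1 of the cited Extremes paper, which is what the paper does), your proof is incomplete at its only nontrivial point. Your closing alternative --- that $\rho_2$ might simply be bounded near the diagonal, giving $O(\epsilon^{2N})$ --- is a strictly stronger repulsion statement that is not established by the assumptions here and is not needed; I would drop it and instead prove the $\|t-s\|^{\alpha}$ decay.
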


\begin{proof}
The proof is based on the proof of Lemma 3 in \citet{Piterbarg96} and Lemma 4.1 in \citet{Extreme}.
\begin{equation}
\begin{split}
\E[M_u(M_u-1)] = \int_{D_{\epsilon}}\int_{D_{\epsilon}} \int_u^{\infty}\int_u^{\infty} \E\left[|{\rm det}\nabla^2X(s)||{\rm det}\nabla^2X(t)|\middle|\begin{matrix}
X(s)=x_1,X(t)=x_2 \\
\nabla X(s)=\nabla X(t) = 0
\end{matrix}\right] \\ 
P_{X(s),X(t),\nabla X(s),\nabla X(t)}(x_1,x_2,0,0) \, dx_1 \, dx_2 \, ds \, dt.
\end{split}
\label{eqn:factorial_moment}
\end{equation}

Let 
\begin{equation*}
\E_1(s,t)= \E\left[|{\rm det}\nabla^2X(s)||{\rm det}\nabla^2X(t)|\middle|\begin{matrix}
X(s)=x \\
\nabla X(s)=\nabla X(t) = 0
\end{matrix}\right]    
\end{equation*}
and replace one of the integration limits in (\ref{eqn:factorial_moment}) by $-\infty$, we have
\begin{equation*}
\begin{split}
\E[M_u(M_u-1)] \leq \int_{D_{\epsilon}} \int_{D_{\epsilon}} P_{\nabla X(s),\nabla X(t)}(0,0) ds dt \int_u^{\infty}  \E_1(s,t) P_{X(s)}\left(x \middle|\nabla X(s)=\nabla X(t) =0 \right) dx.
\end{split}
\end{equation*}

Then we can take the Taylor expansion
\begin{equation*}
\nabla X(t) = \nabla X(s) + \nabla^2 X(s) (t-s)+ ||t-s||^{1+\alpha} Y_{s,t}
\end{equation*}
where $Y_{s,t} = (Y_{s,t}^1,...,Y_{s,t}^N)^T$ is a Gaussian vector field. Note that the determinant of $\nabla^2 X(s)$ is equal to the determinant of 

\begin{equation}
\begin{pmatrix}
1 & -(t_1-s_1) & \hdots & -(t_N-s_N) \\
0 & & &\\
\vdots & & \nabla^2 X(s) &\\
0 & & &
\end{pmatrix}
\label{mat:det}
\end{equation}

For $i=2,...,N+1$, multiply the $i$th column of this matrix by $(t_i-s_i)/||t_i-s_i||^2$, take the sum of all such columns and add the result to the first column. Since $\nabla X(s) = \nabla X(t) = 0$, we can derive $\nabla^2 X(s) (t-s) =-||t-s||^{1+\alpha}Y_{s,t}$, and obtain the matrix below with the same determinant as (\ref{mat:det})

\begin{equation*}
\begin{pmatrix}
0 & -(t_1-s_1) & \hdots & -(t_N-s_N) \\
-||t-s||^{-1+\alpha}Y_{s,t}^1 & & &\\
\vdots & & \nabla^2 X(s) &\\
-||s-t||^{-1+\alpha}Y_{s,t}^N & & &
\end{pmatrix}
\label{mat:det_2}
\end{equation*}

Let $r=\max_{1\leq i \leq N}|t_i-s_i|$, 
\begin{equation*}
A_{s,t} = \begin{pmatrix}
0 & -(t_1-s_1)/r & \hdots & -(t_N-s_N)/r \\
Y_{s,t}^1 & & &\\
\vdots & & \nabla^2 X(s) &\\
Y_{s,t}^N & & &
\end{pmatrix}
\label{mat:Ats}
\end{equation*}
So we have  
\begin{equation*}
\E_1(s,t) \leq ||t-s||^{\alpha}\E_2(s,t)
\end{equation*}
where 
\begin{equation*}
\E_2(s,t) = \E\left[ |{\rm det}A_{s,t}||{\rm det}\nabla^2X(t)| \middle| \begin{matrix}
X(s) = x,\nabla X(s) = 0 \\
\nabla^2 X(s)(t-s) = -||t-s||^{1+\alpha}Y_{s,t}
\end{matrix}\right].
\end{equation*}

Using the inequality of arithmetic and geometric means, we can bound the determinant 
\begin{equation*}
\begin{split}
|{\rm det}\nabla^2X(t)| \leq N^{2N-2} \sum_{i,j}|X_{ij}(t)|^N \\
|{\rm det} A_{s,t}| \leq (N+1)^{2N}\sum_{i,j}|a_{ij}|^{N+1}
\end{split}
\end{equation*}
where $a_{ij}$ is the $i,j$ entry of $A_{s,t}$. Apply the inequality again

\begin{equation*}
|{\rm det} \nabla^2 X(t)||{\rm det}A_{s,t}| \leq \frac{1}{2}N^{2N-2}(N+1)^{2N+1}\left(\sum_{i,j} |X_{ij}(t)|^{2N}+\sum_{i,j}|a_{ij}|^{2N+2} \right).
\end{equation*}

For any Gaussian variable $X$ and integer $N \geq 0$, the following inequality holds

\begin{equation*}
\E[X^{2N}] \leq 2^{2N}(\E[X]^{2N}+C_N \Var(X)^{2N})
\end{equation*}
where $C_N$ is a constant depending on $N$. Next, we can focus on the conditional expectation and conditional variance of $X_{ij}(t)$ and $Y_{s,t}$.

By Assumption \ref{con:1} and \ref{con:2} and the fact that the conditional variance of a Gaussian variable is less or equal to the unconditional variance, we can conclude that the conditional variance of $X_{ij}(t)$ and $Y_{s,t}$ are bounded above by some constant.


Summarizing the results above,
\begin{equation*}
\sup_{s,t \in D_{\epsilon}, s\neq t}|\E_2(s,t)| \leq C_1
\end{equation*}
for some constant $C_1>0$ and 
\begin{equation*}
\E_1(s,t) \leq ||t-s||^\alpha \E_2(s,t) \leq C_1||t-s||^\alpha.
\end{equation*}
Combine the results above and with a fixed threshold $u$ 
\begin{align*}
& \int_u^{\infty}  \E_1(s,t) P_{X(s)}\left(x \middle|\nabla X(s)=\nabla X(t) =0 \right) dx \\
& \leq C_1||t-s||^\alpha \int_u^{\infty}P_{X(s)}\left(x \middle|\nabla X(s)=\nabla X(t) =0 \right) dx \\
& =  C_1||t-s||^\alpha \int_u^{\infty}\exp(-(Ax-B)^2) dx \quad \text{for some constant $A$, $B$}\\
& = C_2||t-s||^\alpha
\end{align*}
for some constant $C_2>0$.

Next, by the proof of Lemma 4.1 in \cite{Extreme}

\begin{equation*}
p_{\nabla X(s),\nabla X(t)}(0,0) \leq C_3 ||t-s||^{-N}
\end{equation*}
for some constant $C_3>0$.

Therefore, there exists $C_4$ such that 
\begin{equation*}
\begin{split}
\E[M_u(M_u-1)] \leq C_4\int_{D_{\epsilon}} \int_{D_{\epsilon}} \frac{1}{||t-s||^{N-\alpha}}dt ds = o(\epsilon^N).
\end{split}
\end{equation*}

For $\E[M_u]$, by Kac-Rice formula in \Citet{Adler2007}
\begin{equation*}
 \E[M_u] = \int_{D_{\epsilon}} p_{\nabla X(s)}(0) \E\left[|{\rm det} \nabla^2 X(s)|\mathbbm{1}_{\{\nabla^2 X(s) \prec 0\}}\mathbbm{1}_{\{X(s)>u\}} | \nabla X(s)=0\right] ds .
\end{equation*}

Denote the integrand by $g(s)$. The function $g(s)$ is continuous and positive over the compact domain $D_{\epsilon}$. Thus $\inf_{s \in D_{\epsilon}} g(s) \geq g_0 > 0$, implying

\begin{equation*}
\E[M_u] \geq g_0  \epsilon^N.
\end{equation*}

\noindent Then \eqref{eqn:small_domain} is an immediate consequence of \eqref{ieq:power_approx}.

For $\E[M_{-\infty}]$, by Kac-Rice formula
\begin{equation*}
 \E[M_{-\infty}] = \int_{D_{\epsilon}} p_{\nabla X(s)}(0) \E\left[|{\rm det} \nabla^2 X(s)|\mathbbm{1}_{\{\nabla^2 X(s) \prec 0\}} | \nabla X(s)=0\right] ds.
\end{equation*}
The integrand is also continuous and positive over the compact domain $D_{\epsilon}$ indicating $\E[M_{-\infty}] = o(1)$ for small $\epsilon$. Thus we have
\begin{equation*}
\E[M_u]_{\adj} = \E[M_u]/\max(1,\E[M_{-\infty}]) = \E[M_u]/\max(1,o(1)) = \E[M_u]
\end{equation*}
for sufficiently small $\epsilon$.
\end{proof}






\subsection{Large threshold}

For large threshold $u$, the following asymptotic result shows power can be precisely approximated by $\E[M_u]$.

\begin{thm}\label{thm:large_u}
For any fixed domain $D$, as $u\to \infty$
\begin{equation}\label{eq:approx_1}
\P[M_u \ge 1] = \E[M_u](1-o(e^{-\alpha u^2}))
\end{equation}
where the error term $o(e^{-\alpha u^2})$ is non-negative and $\alpha>0$ is some constant.
\end{thm}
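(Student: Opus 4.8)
\emph{Proof plan.} By the sandwich \eqref{ieq:power_approx} we have $0\le \E[M_u]-\P[M_u\ge 1]\le\tfrac12\E[M_u(M_u-1)]$, so the error term $1-\P[M_u\ge1]/\E[M_u]$ is automatically non-negative, and it remains to show $\E[M_u(M_u-1)]/\E[M_u]=o(e^{-\alpha u^2})$ for a suitable $\alpha>0$. The strategy is to bound $\E[M_u]$ from below at the right exponential order and $\E[M_u(M_u-1)]$ from above at a strictly smaller one.

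For the lower bound I would use the Kac--Rice formula for $\E[M_u]$ displayed in the proof of Theorem~\ref{thm:small_domain}. Since $Z$ has constant unit variance, $\Cov(Z(s),\nabla Z(s))=0$, hence $X(s)$ is independent of $\nabla X(s)$ and $\E[X(s)\mid\nabla X(s)=0]=\theta(s)$; at the peak location $s^\ast$ one has $\nabla\theta(s^\ast)=0$, and as $u\to\infty$ the Kac--Rice integrand at $s^\ast$ is $\asymp u^{N-1}e^{-(u-\theta_0)^2/2}$ (conditionally on $X(s^\ast)>u$ the Hessian is strongly negative definite, giving the factor $u^N$, and $\P[X(s^\ast)>u\mid\nabla X(s^\ast)=0]=\bar\Phi(u-\theta_0)$). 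As the integrand is continuous and positive on $\bar D$, integrating over a fixed neighbourhood of $s^\ast$ on which $\theta\ge\theta_0-1$ yields $\E[M_u]\ge c\,u^{N-1}e^{-(u-\theta_0+1)^2/2}\ge e^{-u^2/2-Cu}$ for constants $c,C>0$ and all large $u$.

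For the upper bound I would start from the representation \eqref{eqn:factorial_moment} and split $D\times D$ into an off-diagonal part $\{\|s-t\|\ge\delta\}$ and a near-diagonal part $\{\|s-t\|<\delta\}$, with $\delta>0$ a fixed, sufficiently small constant. On the off-diagonal part, Assumptions~\ref{con:1}--\ref{con:2} together with compactness of $\{(s,t)\in\bar D\times\bar D:\|s-t\|\ge\delta\}$ give a uniform bound $\rho_0<1$ on the conditional correlation of $(X(s),X(t))$ given $\nabla X(s)=\nabla X(t)=0$, a uniform ($u$-free) bound $m_0$ on the corresponding conditional means, and uniform bounds on the conditional covariances of the Hessian entries; since the conditional mean of $|\det\nabla^2X(s)|\,|\det\nabla^2X(t)|$ given $X(s)=x_1,X(t)=x_2$ grows only polynomially in $(x_1,x_2)$, a Laplace estimate of the inner integral over $x_1,x_2>u$ bounds the off-diagonal contribution by $\mathrm{poly}(u)\,e^{-(u-m_0)^2/(1+\rho_0)}$, whose rate $1/(1+\rho_0)$ exceeds $1/2$. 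On the near-diagonal part I would follow the Taylor-expansion scheme of Theorem~\ref{thm:small_domain} (cf.\ \citet{Piterbarg96}, \citet{Extreme}): with $v=(t-s)/\|s-t\|$, the relation $\nabla X(t)=\nabla X(s)+\nabla^2X(s)(t-s)+\|s-t\|^{1+\eta}Y_{s,t}$ extracts a factor $\|s-t\|^{\eta}$ from the conditional expectation of the determinants, while $p_{\nabla X(s),\nabla X(t)}(0,0)\le C\|s-t\|^{-N}$, so the two combine to the integrable singularity $\|s-t\|^{\eta-N}$ (an overall factor $\delta^{\eta}$). The crucial extra point is that as $t\to s$ the joint conditioning $\nabla X(s)=\nabla X(t)=0$ effectively pins $\nabla^2X(s)v$ near $0$; since $X(s)\perp\nabla X(s)$ one has $\Cov(X(s),\nabla^2X(s)v\mid\nabla X(s)=0)=-\Lambda(s)v\ne 0$ (with $\Lambda(s)=\Var(\nabla Z(s))\succ0$ from the sub-vector non-degeneracy in Assumption~\ref{con:2}), so this conditioning strictly lowers the conditional variance of $X(s)$ below a uniform bound $\sigma_1^2=1-c_0<1$ for all $0<\|s-t\|<\delta$ (the reduction being $(\Lambda(s)v)^\top\Var(\nabla^2X(s)v\mid\nabla X(s)=0)^{-1}(\Lambda(s)v)(1+O(\|s-t\|^{\eta}))$, bounded below uniformly). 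The same Laplace estimate then bounds the near-diagonal contribution by $\delta^{\eta}\,\mathrm{poly}(u)\,e^{-(u-m_0)^2/(2\sigma_1^2)}$ with $\sigma_1^2<1$.

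Putting the three bounds together gives $\E[M_u(M_u-1)]/\E[M_u]\lesssim \mathrm{poly}(u)\,e^{Cu}\bigl(e^{-\beta_1u^2}+e^{-\beta_2u^2}\bigr)$ with $\beta_1=\tfrac1{2\sigma_1^2}-\tfrac12>0$ and $\beta_2=\tfrac1{1+\rho_0}-\tfrac12>0$; any $\alpha\in(0,\min(\beta_1,\beta_2))$ then yields \eqref{eq:approx_1}. I expect the main obstacle to be the near-diagonal step: making the determinant/singularity Taylor estimates and, above all, the claimed uniform drop of the conditional variance of $X(s)$ strictly below $1$ rigorous and uniform in $s\in\bar D$, in the unit direction $v$, and as $\|s-t\|\to0$ — which requires extracting the requisite non-degeneracy of $(Z(s),\nabla Z(s),\nabla^2Z(s))$ (a sub-vector consequence of Assumption~\ref{con:2}) and carrying the $\|s-t\|$- and $u$-dependence uniformly through the Laplace bound.
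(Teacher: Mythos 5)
Your proposal is correct and follows the same skeleton as the paper's proof: start from the sandwich \eqref{ieq:power_approx}, lower-bound $\E[M_u]$ by a quantity of order $e^{-u^2/2-O(u)}$, and show that the factorial moment $\E[M_u(M_u-1)]$ is super-exponentially small, i.e.\ $o(e^{-u^2/2-\alpha u^2})$. The difference is one of sourcing: the paper disposes of the factorial-moment bound in one line by citing Lemma 3 of \citet{Piterbarg96}, and lower-bounds $\E[M_u]$ via $\E[M_u]\ge\P[M_u\ge1]\ge\P[\sup X(s)\ge u]$, whereas you reconstruct the Piterbarg estimate from scratch (off-diagonal part controlled by a conditional correlation bounded away from $1$ on a compact set; near-diagonal part controlled by the $\|s-t\|^{\eta-N}$ singularity cancellation together with the uniform drop of $\Var(X(s)\mid\nabla X(s)=\nabla X(t)=0)$ strictly below $1$, coming from $\Cov(X(s),\nabla^2X(s)v)=-\Lambda(s)v\neq0$), and you lower-bound $\E[M_u]$ directly from the Kac--Rice integrand at the signal peak. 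Both mechanisms you identify are exactly the ones underlying Piterbarg's lemma, and you correctly flag the uniformity of the variance drop (in $s$, in the unit direction $v$, and as $\|s-t\|\to0$) as the delicate point; one further detail your sketch glosses over is that the conditional means entering the near-diagonal Laplace bound involve regression coefficients against the nearly degenerate vector $(\nabla X(s),\nabla X(t))$, whose uniform boundedness also has to be checked (it holds because the conditioning values $-\nabla\theta(s)$ and $-\nabla\theta(t)$ are themselves $O(\|s-t\|)$-close). What your route buys is a self-contained argument that does not lean on a cited lemma; what the paper's route buys is brevity.
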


\begin{proof}

By lemma 3 of \citet{Piterbarg96}, as $u \to \infty$, the factorial moment is super-exponentially small. That means $\exists \alpha >0$ s.t.
\begin{equation*}
\E[M_u(M_u-1)] = o(e^{-\frac{u^2}{2}-\alpha u^2}).
\end{equation*}
Also 
\begin{equation*}
E[M_u] \geq \P[M_u \geq 1] \geq \P[\text{sup} X(s) \geq u] = O(e^{-\frac{u^2}{2}}).
\end{equation*}
Thus, the factorial moment decays exponentially faster than $\E[M_u]$. The result is an immediate consequence of \eqref{ieq:power_approx}.
\end{proof}
Notice that the threshold $u$ does not affect the value of $\E[M_{-\infty}]$ which is part of the adjusted $\E[M_u]$. By \eqref{eq:approx_1}
\begin{equation*}
 \P[M_u \ge 1] = \E[M_u]_{\adj}(1-o(e^{-\alpha u^2}))\max(1,\E[M_{-\infty}]).
 \end{equation*} 
If $\E[M_{-\infty}] > 1$, the adjusted $\E[M_u]$ might be overly conservative for large threshold $u$. Therefore, we only recommend $\E[M_u]$ for this scenario.  

\subsection{Sharp signal}
The following theorem provides an asymptotic power approximation when the signal is sharp. Interestingly, while the power function is generally non-Gaussian, it becomes closer to Gaussian as the signal peaks become sharper.

\begin{thm}\label{thm:sharp_signal}
Let $\theta(s) = ah(s) + \theta_0$ where $h(s)$ is a unimodal mean function with maximum equal to 0 at $s_0$, $a>0$, and $\theta_0$ represents the height. For any fixed threshold $u$, as $a \to \infty$
\begin{equation}
\P[M_u \geq 1]  = \E[M_u] + o(1) = \E[M_u]_{\adj} + o(1) = \Phi(\theta_0-u)(1+o(1)),
\end{equation}
where $\Phi(x)$ is CDF of the standard Gaussian distribution.
\end{thm}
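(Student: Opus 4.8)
The plan is to prove the stronger statement that, as $a\to\infty$, each of $\P[M_u\ge1]$, $\E[M_u]$ and $\E[M_u]_{\adj}$ converges to $\Phi(\theta_0-u)$; since $\Phi(\theta_0-u)$ is a fixed, strictly positive constant, the chain of $o(1)$ and $(1+o(1))$ identities in the statement follows at once. By Proposition \ref{prop:height_equi} I may take $\theta_0=0$ and replace $u$ by $u-\theta_0$, so it suffices to treat $\theta(s)=ah(s)$ and show the common limit is $\Phi(-u)$. Throughout, write $H=\nabla^2 h(s_0)$; here $H\prec0$ and $\nabla h$ has no other zero in $D$, which I take to be part of the unimodality hypothesis.

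For $\E[M_u]$ I would start from the Kac--Rice formula, using $\nabla X(s)=\nabla Z(s)+a\nabla h(s)$, $\nabla^2X(s)=\nabla^2Z(s)+a\nabla^2h(s)$ and $X(s)=Z(s)+ah(s)$:
\[
\E[M_u]=\int_D p_{\nabla Z(s)}\big(-a\nabla h(s)\big)\,\E\!\left[\,|\det\nabla^2X(s)|\,\mathbbm{1}_{\{\nabla^2X(s)\prec0\}}\mathbbm{1}_{\{X(s)>u\}}\,\middle|\,\nabla X(s)=0\right]ds,
\]
and rescale by $s=s_0+t/a$, $ds=a^{-N}\,dt$. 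Since $\det\nabla^2X(s)=a^N\det\!\big(\nabla^2h(s_0+t/a)+a^{-1}\nabla^2Z(s_0+t/a)\big)$, the factor $a^N$ cancels the Jacobian. Taylor expansion gives $a\nabla h(s_0+t/a)=Ht+O(|t|^2/a)$ and $ah(s_0+t/a)=O(|t|^2/a)\to0$, so on the full-measure event $\{Z(s_0)\neq u\}$ the integrand converges pointwise to $p_{\nabla Z(s_0)}(-Ht)\,|\det H|\,\P[Z(s_0)>u\mid\nabla Z(s_0)=-Ht]$ while $a(D-s_0)\uparrow\R^N$. Dominated convergence, followed by $v=-Ht$ (so $|dv|=|\det H|\,|dt|$), then yields
\[
\E[M_u]\longrightarrow\int_{\R^N}p_{\nabla Z(s_0)}(v)\,\P[Z(s_0)>u\mid\nabla Z(s_0)=v]\,dv=\P[Z(s_0)>u]=\Phi(-u).
\]
The identical computation with $\mathbbm{1}_{\{X(s)>u\}}$ deleted gives $\E[M_{-\infty}]\to1$, hence $\max(1,\E[M_{-\infty}])\to1$ and $\E[M_u]_{\adj}\to\Phi(-u)$.

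For the matching lower bound on the power I would use a shrinking cube $B_a=U(s_0,\rho/\sqrt a)$ with $\rho$ fixed. On $B_a$ one has $\nabla^2X(s)=\nabla^2Z(s)+a\nabla^2h(s)$, and since $\nabla^2h(s)\to H\prec0$ while $\nabla^2Z$ is bounded near $s_0$ almost surely, $\nabla^2X\prec0$ on all of $B_a$ for $a$ large, so $X$ is strictly concave there. On $\partial B_a$, writing $w=\sqrt a(s-s_0)$, $ah(s)=\tfrac12 w^\top Hw+o(1)\le-c_\rho+o(1)$ with $c_\rho>0$, whereas $|Z(s)-Z(s_0)|$ is bounded by the oscillation of $Z$ over $B_a$, which tends to $0$ almost surely. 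Hence, almost surely, for all large $a$, $\sup_{\partial B_a}X<X(s_0)=Z(s_0)$, so $X$ attains its maximum over $\overline{B_a}$ at an interior point, which by strict concavity is a genuine local maximum of height $\ge Z(s_0)$. Thus $\mathbbm{1}_{\{M_u\ge1\}}\ge\mathbbm{1}_{\{Z(s_0)>u\}}$ for all large $a$ almost surely, and Fatou's lemma gives $\liminf_{a\to\infty}\P[M_u\ge1]\ge\P[Z(s_0)>u]=\Phi(-u)$. Combined with $\P[M_u\ge1]\le\E[M_u]\to\Phi(-u)$, this forces $\P[M_u\ge1]\to\Phi(-u)$, and since $\E[M_u]$ and $\E[M_u]_{\adj}$ share this limit we conclude $\P[M_u\ge1]=\E[M_u]+o(1)=\E[M_u]_{\adj}+o(1)=\Phi(\theta_0-u)(1+o(1))$.

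The main obstacle is producing an $a$-uniform integrable majorant for the rescaled integrand in the dominated-convergence step (for both $\E[M_u]$ and $\E[M_{-\infty}]$). The key is the competition between two effects: for $\|s-s_0\|\gtrsim a^{-1/2}$ the nondegeneracy of $H$ gives $|\nabla h(s)|\gtrsim\|s-s_0\|$, so $p_{\nabla Z(s)}(-a\nabla h(s))$ is exponentially small in $a\|s-s_0\|^2$, which swamps the merely polynomial-in-$a$ growth of the conditional expectation; and that conditional expectation must itself be bounded uniformly in $s$ and $a$. For the latter I would bound $|\det\nabla^2X(s)|$ by $a$-polynomials in $|X_{ij}(s)|$ via the arithmetic--geometric mean inequality exactly as in the proof of Theorem \ref{thm:small_domain}, apply the Gaussian moment bound, and invoke Assumptions \ref{con:1}--\ref{con:2} to keep the conditional variances bounded; the conditioning $\nabla Z(s)=-a\nabla h(s)$ only shifts the conditional mean of $\nabla^2Z(s)$ by an amount that is absorbed into the density prefactor $p_{\nabla Z(s)}(-a\nabla h(s))$. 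Once these uniform bounds are in place, the Taylor-expansion bookkeeping and the identification of the limiting conditional law are routine.
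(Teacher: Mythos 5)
Your evaluation of $\lim_{a\to\infty}\E[M_u]$ (and of $\E[M_{-\infty}]\to 1$, hence of the adjusted quantity) is essentially the paper's: both compute the Kac--Rice integral asymptotically around the unique zero of $\nabla h$, the paper by invoking Laplace's method and you by the equivalent rescaling $s=s_0+t/a$ followed by dominated convergence; in both cases the delicate point --- an $a$-uniform integrable majorant, resting on $\nabla^2 h(s_0)\prec 0$ being nonsingular and $s_0$ being the only critical point of $h$ in $D$ --- is sketched rather than fully executed, so you are no worse off than the paper there (and you make explicit the hypotheses the paper leaves implicit in ``unimodal''). Where you genuinely diverge is in closing the gap between $\P[M_u\ge1]$ and $\E[M_u]$: the paper stays inside the factorial-moment framework of \eqref{ieq:power_approx}, citing Lemma A.1 of \citet{Annals} to get $M_{-\infty}\overset{p}{\to}1$ and then dominating $M_u(M_u-1)$ by the number of critical points to conclude $\E[M_u(M_u-1)]\to 0$; you instead prove a matching lower bound on the power directly, by exhibiting on the shrinking cube $U(s_0,\rho/\sqrt a)$ an almost-sure local maximum of height at least $X(s_0)$ (strict concavity because $a\nabla^2 h$ dominates $\nabla^2 Z$ there, boundary values pushed below $X(s_0)$ by the order-one quadratic drop of $ah$ against the vanishing oscillation of $Z$), and then sandwich $\P[M_u\ge1]$ between $\Phi(\theta_0-u)+o(1)$ and $\E[M_u]\to\Phi(\theta_0-u)$. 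Your route is self-contained and avoids both the external lemma and the second-moment bookkeeping, at the cost of a careful pathwise construction; the paper's route reuses machinery already deployed for Theorems \ref{thm:small_domain} and \ref{thm:large_u} and yields the additional information that $\E[M_u(M_u-1)]\to0$. I find no gap in your argument beyond the admittedly unfinished majorant, which the paper's own proof also glosses over.
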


\begin{proof}

By lemma A.1 of \citet{Annals}, as $a \to \infty$ 

\begin{equation*}
\P(M_{-\infty} = 1) \geq 1 - O(\exp(-ca^2)),
\end{equation*}
where $c>0$ is some constant. Therefore $M_{-\infty} \overset{p}{\to} 1$. 

Since $M_u \leq M_{-\infty}$ and both of them only take non-negative integer values, $|M_u(M_u-1)|$ and $|M_{-\infty}(M_{-\infty}-1)|$ are bounded above by $|M(M-1)|$ where $M$ is the number of critical points of the random field $X$. Apply Kac-Rice formula
\begin{equation*}
\begin{split}
\E[M(M-1)] = \int_{D}\int_{D} \E\left[|{\rm det}\nabla^2X(s)||{\rm det}\nabla^2X(t)|\middle|\nabla X(s)=\nabla X(t) = 0 \right]
P_{\nabla X(s),\nabla X(t)}(0,0) ds dt.
\end{split}
\end{equation*}

Denote the integrand by $g(s,t,a)$. The function $g(s,t,a)$ is continuous and positive over the compact domain $D$ and $M(M-1) \overset{p}{\to} 0$ as $a \to \infty$. Thus there exists $g_0>0$ such that $\E[M(M-1)] \leq g_0$. Then by dominated convergence theorem
\begin{equation*}
\E[M_u(M_u-1)] \to 0
\end{equation*}
as $a \to \infty$. Since $M_{-\infty} \overset{p}{\to} 1$, the adjusted $\E[M_u]$
\begin{equation*}
\E[M_u]_{\adj} = \E[M_u]\max(1,\E[M_{-\infty}]) = \E[M_u](1 + o(1)) = \E[M_u] + o(1).
\end{equation*}

To calculate $\E[M_u]$, apply Kac-Rice formula 
\begin{align*} \label{eqn:prop4}
 \E[M_u] =& \int_D p_{\nabla X(s)}(0) \E\left[|{\rm det} \nabla^2 X(s)|\mathbbm{1}_{\{\nabla^2 X(s) \prec 0\}}\mathbbm{1}_{\{X(s)>u\}} | \nabla X(s)=0\right] ds \\
=& \int_D p_{\nabla X(s)}(0) \E\left[|{\rm det} \nabla^2 X(s)|\mathbbm{1}_{\{\nabla^2 X(s) \prec 0\}}\mathbbm{1}_{\{X(s)>u\}} | \nabla X(s)=0\right] ds \\
=& \int_D \frac{1}{(2\pi)^{N/2}\sqrt{\det(\Lambda)}}\exp(-a^2 (\nabla h(s))^T\Lambda^{-1}\nabla h(s)/2)\\
& \E\left[|{\rm det} (\nabla^2 Z(s) + a\nabla^2 h(s))|\mathbbm{1}_{\{\nabla^2 X(s) \prec 0\}}\mathbbm{1}_{\{X(s)>u\}} | \nabla X(s)=0\right] ds
\numberthis
\end{align*}
where $\Lambda$ is the covariance matrix of $\nabla h(s)$. Let $f(s) = (\nabla h(s))^T\Lambda^{-1}\nabla h(s)/2$ which attains its minimum $0$ only at $s_0$. Similar to the proof of A.4 in \citet{Annals}, as $a \to \infty$, \eqref{eqn:prop4} can be approximated by applying Laplace's method
\begin{align*}
\E[M_u] =& \frac{\det(a\nabla^2 h(s_0))}{(2\pi)^{N/2}\sqrt{\det(\Lambda)}} \left(\frac{(2\pi)^N \det(\Lambda)}{a^{2N}\det(\nabla^2 h(s_0))}\right)^{1/2}\Phi(\theta_0-u) + O(a^{-2})\\
=& \Phi(\theta_0-u)+O(a^{-2}).
\end{align*}
This finishes the proof.

\end{proof}

\section{Explicit formulas}\label{sec:explicit}

We have showed that the power for peak detection can be approximated by the expected number of local maxima above $u$, $\E[M_u]$,  under certain scenarios such as small domain and large threshold. Although we can apply the Kac-Rice formula to calculate $\E[M_u]$, it remains difficult to evaluate it explicitly for $N>1$ without making any further assumptions. In this section, we focus on computing $\E[M_u]$ and show a general formula can be obtained if the noise field is isotropic. Furthermore, explicit formulas when $N = 1,2,3$ are derived for application purposes.

\subsection{Isotropic Gaussian fields}\label{sec:iso_Gfield}

Suppose $Z$ is a zero-mean unit-variance isotropic random field. We can write the covariance function of $Z$ as $\E\{Z(s)Z(t)\}=\rho(\|s-t\|^2)$ for an appropriate function $\rho(\cdot): [0,\infty) \rightarrow \R$. Denote
\begin{equation}\label{Eq:kappa}
\rho'=\rho'(0), \quad \rho''=\rho''(0), \quad \kappa=-\rho'/\sqrt{\rho''}.
\end{equation}
where $\rho'$ and $\rho''$ are first and second derivative of function $\rho$ respectively.

The following lemma comes from \citet{Bernoulli}.

\begin{lemma}\label{Lem:cov of isotropic Euclidean} For each $s\in \R^N$ and $i$, $j$, $k$, $l\in\{1,\ldots, N\}$,
	\begin{equation*}
	\begin{split}
	\E\{Z_i(s)Z(s)\}&=\E\{Z_i(s)Z_{jk}(s)\}=0, \\
	\E\{Z_i(s)Z_j(s)\}&=-\E\{Z_{ij}(s)Z(s)\}=-2\rho'\delta_{ij},\\
	\E\{Z_{ij}(s)Z_{kl}(s)\}&=4\rho''(\delta_{ij}\delta_{kl} + \delta_{ik}\delta_{jl} + \delta_{il}\delta_{jk})
	\end{split}
	\end{equation*}
	where $\rho'$ and $\rho''$ are defined in \eqref{Eq:kappa} and $\delta_{ij}$ is the Kronecker delta function.
\end{lemma}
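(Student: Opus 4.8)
The plan is to deduce every identity in the lemma from a single principle: the covariance between two partial derivatives of the Gaussian field $Z$ equals the corresponding mixed partial derivative of its covariance function. Write $r(s,t) := \E\{Z(s)Z(t)\} = \rho(\|s-t\|^2)$. Since Assumption~\ref{con:1} makes $Z$ mean-square differentiable up to second order (the a.s.\ $C^2$ regularity together with the mean-square H\"older bound on the second derivatives supplies the $L^2$-differentiability needed; see \citealp{Adler2007}), for multi-indices $\alpha,\beta$ with $|\alpha|,|\beta|\le 2$ one has $\E\{\partial^\alpha Z(s)\,\partial^\beta Z(t)\} = \partial^\alpha_s\partial^\beta_t\, r(s,t)$. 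I would then obtain each formula by computing the relevant mixed partial of $r$ and evaluating it on the diagonal $t=s$.

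For the computation I would set $q := \|s-t\|^2 = \sum_{k=1}^N \Delta_k^2$ with $\Delta_k := s_k - t_k$, so that $\partial_{s_k} q = 2\Delta_k$, $\partial_{t_k} q = -2\Delta_k$, and $\partial_{s_k}\Delta_\ell = \delta_{k\ell} = -\partial_{t_k}\Delta_\ell$. Differentiating $r = \rho(q)$ by the chain rule repeatedly produces, at every order, a finite sum of terms of the form (a derivative $\rho^{(m)}(q)$) $\times$ (a monomial in the $\Delta_k$) $\times$ (Kronecker deltas). The organizing observation is a parity argument: each further $s$- or $t$-derivative changes the number of $\Delta$-factors in a term by exactly $\pm 1$, so after an odd number of derivatives every term carries at least one factor $\Delta_k$ and hence vanishes at $t=s$. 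This instantly kills $\E\{Z_i(s)Z(s)\}$ (one derivative) and $\E\{Z_i(s)Z_{jk}(s)\}$ (three derivatives). For the even-order cases it suffices to keep the $\Delta$-free terms: $\partial_{s_i}\partial_{t_j} r$ contributes $-2\rho'(0)\delta_{ij}$, giving $\E\{Z_i(s)Z_j(s)\} = -2\rho'\delta_{ij}$; $\partial_{s_i}\partial_{s_j} r$ contributes $2\rho'(0)\delta_{ij}$, giving $\E\{Z_{ij}(s)Z(s)\} = 2\rho'\delta_{ij}$, i.e.\ $-\E\{Z_{ij}(s)Z(s)\} = -2\rho'\delta_{ij}$; and the fourth-order mixed partial $\partial_{s_i}\partial_{s_j}\partial_{t_k}\partial_{t_l} r$, after discarding every term with a leftover $\Delta$-factor, leaves precisely the three contractions $4\rho''(0)(\delta_{ij}\delta_{kl} + \delta_{ik}\delta_{jl} + \delta_{il}\delta_{jk})$ (the three pairings of the four indices, as Wick's formula predicts).

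I expect the main obstacle to be not the algebra but the justification that one may differentiate under the expectation and that the needed mixed partials of $r = \rho(\|\cdot - \cdot\|^2)$ actually exist --- this is exactly where Assumption~\ref{con:1} (and implicitly the finiteness of $\rho'(0)$ and $\rho''(0)$) is used, and it is a standard but not entirely trivial fact about mean-square differentiable Gaussian fields. The remaining work, the fourth-order chain rule behind $\E\{Z_{ij}(s)Z_{kl}(s)\}$, is the longest part but is purely mechanical; grouping terms by their number of $\Delta$-factors keeps it short, since only the $\Delta$-free terms survive on the diagonal.
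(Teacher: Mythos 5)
Your proposal is correct: the identification of derivative covariances with mixed partials of $r(s,t)=\rho(\|s-t\|^2)$, the parity argument killing all odd-order terms on the diagonal, and the surviving $\Delta$-free contributions at orders two and four all check out and reproduce the stated constants. Note that the paper itself offers no proof of this lemma --- it is imported verbatim from \citet{Bernoulli} --- and your argument is precisely the standard derivation used there, so there is nothing substantive to contrast.
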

In particular, it follows from Lemma \ref{Lem:cov of isotropic Euclidean} that ${\rm Var}(Z_i(s))=-2\rho'$ and ${\rm Var}(Z_{ii}(s))=12\rho''$ for any $i\in\{1,\ldots, N\}$, implying $\rho'<0$ and $\rho''>0$ and hence $\kappa>0$.

We can use theoretical results from \emph{Gaussian Orthogonally Invariant} (GOI) matrices to make the calculation of $\E{[M_u]}$ easier. GOI matrices were first introduced in \citet{Schwartzman08}, and used for the first time in the context of random fields in \citet{Bernoulli}. It is a class of Gaussian random matrices that are invariant under orthogonal transformations, and can be useful for computing the expected number of critical points of isotropic Gaussian fields. We call an $N\times N$ random matrix $G=(G_{ij})_{1\le i,j\le N}$ GOI with covariance parameter $c$, denoted by $\GOI (c)$, if it is symmetric and all entries are centered Gaussian variables such that
\begin{equation}\label{eq:GOI}
\E[G_{ij}G_{kl}] = \frac{1}{2}(\delta_{ik}\delta_{jl} + \delta_{il}\delta_{jk}) + c\delta_{ij}\delta_{kl}.
\end{equation}

The following lemma is Lemma 3.4 from \citet{Bernoulli}.
\begin{lemma}\label{Lem:GOE for det Hessian} Let the assumptions in Lemma \ref{Lem:cov of isotropic Euclidean} hold. Let  $\widetilde{G}$ and $G$ be $\GOI(1/2)$ and $\GOI((1-\kappa^2)/2)$ matrices respectively. $I_N$ denotes $N\times N$ identity matrix.
	
	(i) The distribution of $\nabla^2 Z(s)$ is the same as that of $\sqrt{8\rho''}\widetilde{G}$.
	
	(ii) The distribution of $(\nabla^2 Z(s)|Z(s)=z)$ is the same as that of $\sqrt{8\rho''}\big[G - \big(\kappa z/\sqrt{2}\big)I_N\big]$.
\end{lemma}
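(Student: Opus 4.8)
The plan is to reduce everything to moment matching. Both $\nabla^2 Z(s)$, which is symmetric by Schwarz's theorem since $Z\in C^2$, and a $\GOI(c)$ matrix are symmetric matrices whose entries are jointly Gaussian; hence two such matrices have the same law as soon as the entrywise means and covariances agree. So I never need anything beyond Lemma \ref{Lem:cov of isotropic Euclidean}, the defining identity \eqref{eq:GOI} of $\GOI(c)$, and the standard Gaussian conditioning formulas.

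For part (i), $\nabla^2 Z(s)$ is centered, so only covariances matter. For $\widetilde{G}\sim\GOI(1/2)$, equation \eqref{eq:GOI} gives $\E[\widetilde{G}_{ij}\widetilde{G}_{kl}] = \tfrac12(\delta_{ik}\delta_{jl}+\delta_{il}\delta_{jk}) + \tfrac12\delta_{ij}\delta_{kl}$, so $\sqrt{8\rho''}\,\widetilde{G}$ has entrywise covariance $8\rho''$ times this, namely $4\rho''(\delta_{ij}\delta_{kl}+\delta_{ik}\delta_{jl}+\delta_{il}\delta_{jk})$, which is exactly $\E\{Z_{ij}(s)Z_{kl}(s)\}$ from Lemma \ref{Lem:cov of isotropic Euclidean}. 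Since both are centered symmetric Gaussian matrices with this covariance, they coincide in law, settling (i).

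For part (ii), I would first compute the conditional law of $\nabla^2 Z(s)$ given $Z(s)=z$ from the Gaussian conditioning formulas, using $\Var(Z(s))=\rho(0)=1$ together with $\E\{Z_{ij}(s)Z(s)\}=2\rho'\delta_{ij}$, which is read off from Lemma \ref{Lem:cov of isotropic Euclidean}. This yields conditional mean matrix $2\rho' z\,I_N$ and conditional entrywise covariance $4\rho''(\delta_{ik}\delta_{jl}+\delta_{il}\delta_{jk}) + (4\rho''-4\rho'^2)\delta_{ij}\delta_{kl}$. On the other side, for $G\sim\GOI((1-\kappa^2)/2)$ the matrix $\sqrt{8\rho''}\,[G-(\kappa z/\sqrt{2})I_N]$ has mean $-\sqrt{8\rho''}\,(\kappa z/\sqrt{2})\,I_N$ and entrywise covariance $4\rho''(\delta_{ik}\delta_{jl}+\delta_{il}\delta_{jk}) + 4\rho''(1-\kappa^2)\delta_{ij}\delta_{kl}$. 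It then remains only to check that the two parametrizations agree.

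The only step with any content — and the one I expect to be mildly fiddly — is verifying this match after substituting $\kappa=-\rho'/\sqrt{\rho''}$ from \eqref{Eq:kappa}: for the mean, $-\sqrt{8\rho''}\,\kappa/\sqrt{2} = \sqrt{8\rho''}\,\rho'/(\sqrt{2}\sqrt{\rho''}) = 2\rho'$; and for the $\delta_{ij}\delta_{kl}$ part of the covariance, $4\rho''(1-\kappa^2) = 4\rho'' - 4\rho''\kappa^2 = 4\rho'' - 4\rho'^2$. Both identities follow at once from the definition of $\kappa$. With mean and covariance matched, joint Gaussianity gives equality in distribution, completing (ii) and hence the proof.
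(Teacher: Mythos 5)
Your proof is correct, and all the moment computations check out: the unconditional covariance match in (i), and in (ii) the conditional mean $2\rho' z\,I_N$ and the identities $-\sqrt{8\rho''}\,\kappa/\sqrt{2}=2\rho'$ and $4\rho''(1-\kappa^2)=4\rho''-4\rho'^2$ are exactly what is needed. The paper itself gives no proof here --- it imports the statement as Lemma 3.4 of the cited reference --- and the Gaussian moment-matching argument you present is precisely the standard derivation of that result, so there is nothing to add.
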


Lemma \ref{Lem:GOE for det Hessian} shows the distribution and conditional distribution of the Hessian matrix of a centered random field $Z(s)$. Next, we establish the corresponding result for non-centered random field $X(s) =  Z(s) + \theta(s)$.

\begin{lemma}\label{Lem:GOI} Let  $\widetilde{G}$ and $G$ be $\GOI(1/2)$ and $\GOI((1-\kappa^2)/2)$ matrices respectively.
	
	(i) The distribution of $\nabla^2 X(s)$ is the same as that of
	\[
	 \sqrt{8\rho''} \widetilde{G}+ \nabla^2 \theta(s).
	\] 
	
	(ii) The distribution of $(\nabla^2 X(s)|X(s)=x)$ is the same as that of 
	\[
	  \sqrt{8\rho''} \left[ G - \frac{\kappa (x-\theta(s))}{\sqrt{2}}I_N \right] + \nabla^2 \theta(s).
	\]
\end{lemma}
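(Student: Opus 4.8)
The plan is to reduce everything to Lemma \ref{Lem:GOE for det Hessian} by exploiting the fact that $\theta$ is a deterministic $C^2$ function, so that differentiation and conditioning interact with it trivially. Since $X(s) = Z(s) + \theta(s)$ and $\theta$ is non-random, we have the pointwise identity $\nabla^2 X(s) = \nabla^2 Z(s) + \nabla^2\theta(s)$, where $\nabla^2\theta(s)$ is a fixed (non-random) symmetric matrix. Part (i) is then immediate: the distribution of $\nabla^2 X(s)$ is the distribution of $\nabla^2 Z(s)$ shifted by the constant matrix $\nabla^2\theta(s)$, and by Lemma \ref{Lem:GOE for det Hessian}(i) the former is $\sqrt{8\rho''}\,\widetilde G$, giving $\sqrt{8\rho''}\,\widetilde G + \nabla^2\theta(s)$.

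For part (ii), the key observation is that the event $\{X(s) = x\}$ coincides with the event $\{Z(s) = x - \theta(s)\}$, because $\theta(s)$ is a known constant at the fixed location $s$. Hence the conditional law of $\nabla^2 X(s)$ given $X(s) = x$ equals the conditional law of $\nabla^2 Z(s) + \nabla^2\theta(s)$ given $Z(s) = x - \theta(s)$, which by linearity is the conditional law of $\nabla^2 Z(s)$ given $Z(s) = x - \theta(s)$, shifted by the constant $\nabla^2\theta(s)$. Applying Lemma \ref{Lem:GOE for det Hessian}(ii) with $z = x - \theta(s)$ yields that this conditional law is that of $\sqrt{8\rho''}\bigl[G - (\kappa z/\sqrt 2)I_N\bigr] + \nabla^2\theta(s)$ with $z = x-\theta(s)$, which is exactly the claimed expression.

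I do not anticipate a genuine obstacle here; the lemma is essentially a change-of-variables/translation statement layered on the centered result. The only point that deserves a sentence of care is the equivalence of the conditioning events and the resulting equivalence of regular conditional distributions: because $\theta(s)$ is deterministic, conditioning on $X(s)$ and conditioning on $Z(s)$ generate the same $\sigma$-algebra, so no Jacobian or density manipulation is needed, and the additive constant $\nabla^2\theta(s)$ simply passes through the conditional expectation. One might also remark that $C^2$ regularity of $\theta$ (assumed in the setup) is what guarantees $\nabla^2\theta(s)$ exists and is finite, so the shifted matrix is well defined.
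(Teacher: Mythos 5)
Your proposal is correct and follows exactly the same route as the paper: decompose $\nabla^2 X(s) = \nabla^2 Z(s) + \nabla^2\theta(s)$ with $\theta$ deterministic, identify the conditioning event $\{X(s)=x\}$ with $\{Z(s)=x-\theta(s)\}$, and apply Lemma \ref{Lem:GOE for det Hessian} with $z = x - \theta(s)$. The paper's proof is a two-line version of the same argument; your added remarks on the equivalence of the conditioning $\sigma$-algebras are a harmless elaboration.
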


\begin{proof} Part (i) is a direct consequence of Lemma \ref{Lem:GOE for det Hessian}. For part (ii), note that $(\nabla^2 X(s)|X(s)=x)$ is equivalent to $( \nabla^2 Z(s)| Z(s)=x-\theta(s)) + \nabla^2 \theta(s)$, and the result follows immediately from Lemma \ref{Lem:GOE for det Hessian}.
\end{proof}

\subsection{General formula under isotropy}

\begin{thm}\label{thm:expected-number}
Let $X(s)= Z(s)+\theta(s)$, where $Z(s)$ is a smooth zero-mean unit-variance isotropic Gaussian random field satisfying Assumption \ref{con:1}, \ref{con:2}. Let $\theta(s)$ a smooth $C^3$ mean function such that $\nabla^2\theta(s)$ is a non-singular matrix with ordered eigenvalues $\theta''_1(s)...\theta''_N(s)$ at all critical points $s$. Then for any domain $D$
\begin{equation}
\E[M_u] = \bigg(\frac{2\rho''}{-\pi\rho'}\bigg)^{N/2} \int_D  e^{\frac{\|\nabla \theta(s)\|^2}{4\rho'}} \int_u^{\infty} \phi\left({x-\theta(s)}{}\right)  \E\left[|{\rm det} ({\rm Matrix}(s))|\mathbbm{1}_{\{{\rm Matrix}(s) \prec 0\}}\right] dx \, ds
\label{eqn:M_u_explicit}
\end{equation}
\noindent where $\phi(x)$ is the PDF of the standard Gaussian distribution, ${\rm Matrix}(s)=G - \kappa (x-\theta(s)) I_N/\sqrt{2} + \\$$\diag\{\theta''_1(s),\dots,\theta''_N(s)\} / \sqrt{8\rho''}$, $G$ as in Lemma \ref{Lem:GOI} represents GOI((1-$\kappa^2$)/2), and $\mathbbm{1_{\{\cdot\}}}$ denotes the indicator function.
\end{thm}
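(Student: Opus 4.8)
The plan is to start from the Kac--Rice formula for the expected number of local maxima above $u$, the same one used in the proof of Theorem \ref{thm:small_domain},
\begin{equation*}
\E[M_u] = \int_D p_{\nabla X(s)}(0)\, \E\!\left[|{\rm det}\nabla^2 X(s)|\,\mathbbm{1}_{\{\nabla^2 X(s)\prec 0\}}\mathbbm{1}_{\{X(s)>u\}} \,\middle|\, \nabla X(s)=0\right] ds,
\end{equation*}
whose validity is guaranteed by Assumptions \ref{con:1}, \ref{con:2}, the $C^3$ smoothness of $\theta$, and the non-singularity of $\nabla^2\theta$ at critical points (cf.\ \citet{Adler2007}). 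Everything then reduces to (a) evaluating the Gaussian density of the gradient at $0$ and (b) simplifying the conditional expectation using the isotropy of $Z$.

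For (a), $\nabla X(s)=\nabla Z(s)+\nabla\theta(s)$ is Gaussian with mean $\nabla\theta(s)$ and, by Lemma \ref{Lem:cov of isotropic Euclidean}, covariance $-2\rho' I_N$, so
\begin{equation*}
p_{\nabla X(s)}(0) = \frac{1}{(-4\pi\rho')^{N/2}}\exp\!\left(\frac{\|\nabla\theta(s)\|^2}{4\rho'}\right),
\end{equation*}
which gives the exponential factor and part of the prefactor in \eqref{eqn:M_u_explicit}. For (b), Lemma \ref{Lem:cov of isotropic Euclidean} also gives $\E\{Z_i(s)Z(s)\}=\E\{Z_i(s)Z_{jk}(s)\}=0$; since the vector is jointly Gaussian, $\nabla Z(s)$ is independent of $(Z(s),\nabla^2 Z(s))$, hence $\nabla X(s)$ is independent of $(X(s),\nabla^2 X(s))$, and the conditioning on $\nabla X(s)=0$ may be dropped. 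Conditioning instead on $X(s)=x$, and using $X(s)\sim\mathcal N(\theta(s),1)$ so that $p_{X(s)}(x)=\phi(x-\theta(s))$, rewrites the expectation as $\int_u^\infty \phi(x-\theta(s))\,\E[|{\rm det}\nabla^2 X(s)|\,\mathbbm{1}_{\{\nabla^2 X(s)\prec 0\}}\mid X(s)=x]\,dx$.

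It remains to identify the conditional Hessian expectation with the GOI expression. By Lemma \ref{Lem:GOI}(ii), $(\nabla^2 X(s)\mid X(s)=x)$ has the law of $\sqrt{8\rho''}\bigl[G-\kappa(x-\theta(s))I_N/\sqrt2+\nabla^2\theta(s)/\sqrt{8\rho''}\bigr]$; pulling out $\sqrt{8\rho''}$ contributes $(8\rho'')^{N/2}$ to the determinant, which combines with $p_{\nabla X(s)}(0)$ to give the constant $(2\rho''/(-\pi\rho'))^{N/2}$. Then I would diagonalize $\nabla^2\theta(s)$: picking an orthogonal $O=O(s)$ with $O^T\nabla^2\theta(s)O=\diag\{\theta''_1(s),\dots,\theta''_N(s)\}$ and using $G\overset{d}{=}O^TGO$ (orthogonal invariance of $\GOI((1-\kappa^2)/2)$) together with the invariance of $|{\rm det}(\cdot)|$ and of $\{\cdot\prec 0\}$ under conjugation by $O$, replaces $\nabla^2\theta(s)/\sqrt{8\rho''}$ by $\diag\{\theta''_i(s)\}/\sqrt{8\rho''}$, yielding exactly ${\rm Matrix}(s)$; collecting constants gives \eqref{eqn:M_u_explicit}.

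The algebra is routine; the two steps that need care are the independence argument that lets the conditioning on $\nabla X(s)=0$ be removed --- this is precisely where isotropy is essential, since for an anisotropic field $\nabla Z(s)$ need not be independent of $(Z(s),\nabla^2 Z(s))$ --- and the orthogonal-invariance step, where one must note that although $O(s)$ depends on $s$ it is applied pointwise inside the integral and acts trivially on both $|{\rm det}|$ and the negative-definiteness indicator. Checking the hypotheses under which the Kac--Rice formula applies (finiteness of $\E[M_u]$ and a.s.\ absence of degenerate critical points) is a further technical point, handled by Assumptions \ref{con:1}--\ref{con:2} and the assumed regularity of $\theta$.
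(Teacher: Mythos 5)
Your proposal is correct and follows essentially the same route as the paper: Kac--Rice, the Gaussian density of $\nabla X(s)$ at $0$ giving the exponential factor, Lemma \ref{Lem:GOI}(ii) for the conditional Hessian, and diagonalization of $\nabla^2\theta(s)$ via the orthogonal invariance of $\GOI$ matrices. If anything, you are slightly more explicit than the paper on the step where the conditioning on $\nabla X(s)=0$ is dropped (via the independence of $\nabla Z(s)$ from $(Z(s),\nabla^2 Z(s))$ under isotropy), which the paper's derivation uses only implicitly.
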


\begin{proof}
By the Kac-Rice formula
\begin{align*}
 \E[M_u] &= \int_D p_{\nabla X(s)}(0) \E\left[|{\rm det} \nabla^2 X(s)|\mathbbm{1}_{\{\nabla^2 X(s) \prec 0\}}\mathbbm{1}_{\{X(s)>u\}} | \nabla X(s)=0\right] ds \\
& = \int_D p_{ \nabla Z(s)+\nabla \theta(s)}(0) \E\left[|{\rm det} \nabla^2 X(s)|\mathbbm{1}_{\{\nabla^2 X(s) \prec 0\}}\mathbbm{1}_{\{X(s)>u\}} | \nabla X(s)=0\right] ds \\
& = \int_D \frac{1}{(2\pi)^{N/2}(-2\rho')^{N/2}} e^{\frac{\|\nabla \theta(s)\|^2}{4\rho'}} \E\left[|{\rm det} \nabla^2 X(s)|\mathbbm{1}_{\{\nabla^2 X(s) \prec 0\}}\mathbbm{1}_{\{X(s)>u\}} | \nabla X(s)=0\right] ds \\
& = \int_D   \frac{(8\rho''^2)^{N/2}}{(2\pi)^{N/2}(-2\rho')^{N/2}} e^{\frac{\|\nabla \theta(s)\|^2}{4\rho'}} \int_u^{\infty}   \phi\left({x-\theta(s)}\right)  \E\left[|{\rm det} ({\rm Matrix}(s))|\mathbbm{1}_{\{{\rm Matrix}(s) \prec 0\}}\right] dx \, ds \\
& = \int_{D} \bigg(\frac{2\rho'' }{-\pi\rho'}\bigg)^{N/2} e^{\frac{\|\nabla \theta(s)\|^2}{4\rho'}} \int_u^{\infty}   \phi\left({x-\theta(s)}\right)  \E\left[|{\rm det} ({\rm Matrix}(s))|\mathbbm{1}_{\{{\rm Matrix}(s) \prec 0\}}\right] dx \, ds\\
& = \bigg(\frac{2\rho''}{-\pi\rho'}\bigg)^{N/2} \int_{D} e^{\frac{\|\nabla \theta(s)\|^2}{4\rho'}} \int_u^{\infty}   \phi\left({x-\theta(s)}\right)  \E\left[|{\rm det} ({\rm Matrix}(s))|\mathbbm{1}_{\{{\rm Matrix}(s) \prec 0\}}\right] dx \, ds \\
\end{align*}
Next, we show the derivation from the third to the fourth line in the equation above. Since we assume $\nabla^2 \theta(s)$ is a non-singular matrix at all critical points, then there exists an orthonormal matrix, denoted by $A(s)$, such that $A(s)^T\nabla^2 \theta(s)A(s) = \diag\{\theta''_1(s), \theta''_2, \ldots, \theta''_N(s)\}$, where $\theta''_1\le \ldots \le \theta''_N(s)$ are ordered eigenvalues of $\nabla^2 \theta(s)$. On the other hand, GOI matrices are invariant under orthonormal transformations. By Lemma \ref{Lem:GOI}, the conditional expectation $\E[|{\rm det}(\nabla^2 X(s))|\mathbbm{1}_{\{\nabla^2 X(s)\prec 0\}} | X(s)=x]$ is therefore 
\begin{align*}
& = \E\left[\left|{\rm det}\left(\sqrt{8\rho''} \left[G - \frac{\kappa (x-\theta(s))}{\sqrt{2} }I_N\right] + \nabla^2 \theta(s)\right)\right|\mathbbm{1}_{\{{\rm Matrix}(s) \prec 0\}} \right] \\
& = \E\left[\left|{\rm det}\left(\sqrt{8\rho''} \left[G -\frac{\kappa (x-\theta(s))}{\sqrt{2} } I_N\right] + A(s)^T\nabla^2 \theta(s)A(s)\right)\right|\mathbbm{1}_{\{{\rm Matrix}(s) \prec 0\}} \right] \\
& =  (\sqrt{8\rho''} )^N \E\left[\left|{\rm det}\left(\left[G - \frac{\kappa (x-\theta(s))}{\sqrt{2} } I_N\right] + A(s)^T\nabla^2 \theta(s)A(s) / \sqrt{8\rho''} \right)\right|\mathbbm{1}_{\{{\rm Matrix}(s) \prec 0\}} \right] \\
& =  (\sqrt{8\rho''} )^N \E\left[\left|{\rm det}\left(G - \frac{\kappa (x-\theta(s))}{\sqrt{2} } I_N + \diag\{\theta''_1(s), \theta''_2, \ldots, \theta''_N(s)\} / \sqrt{8\rho''} \right)\right|\mathbbm{1}_{\{{\rm Matrix}(s) \prec 0\}} \right].
\numberthis
\label{eqn:E_Mu}
\end{align*}

\end{proof}

The expression \eqref{eqn:M_u_explicit} can be simplified further if we further assume the mean function $\theta(s)$ to be a rotational symmetric paraboloid centered at $s_0$. In this case, the Hessian of $\theta(s)$ is the identity matrix multiplied by a constant, i.e.
\begin{equation*}
\label{eqn:constant_hessian}
\theta'' = \theta''_1(s) = \theta''_2(s) = ... = \theta''_N(s).
\end{equation*}
Then we can write the mean function as $\theta(s) = \theta_0 + \theta''\|s-s_0\|^2/2 $. 
Define
\begin{equation}\label{eqn:eta}
\eta = \frac{\theta''}{2\kappa \sqrt{\rho''}} = \frac{\theta''}{-2\rho'} = \frac{\theta''}{{\rm Var}(Z_1(s))}.
\end{equation}
and
\begin{equation}
H(\tilde{x}) = \E_{\GOI((1-\kappa^2)/2)}^N \left[\prod_{j=1}^N\left|\lambda_j-\frac{\kappa \tilde{x}}{\sqrt{2}} \right|\mathbbm{1}_{\{\lambda_N<\frac{\kappa \tilde{x}}{\sqrt{2}} \}}\right].
\label{eqn:H_tilde}
\end{equation}
$\E[M_u]$ can be simplified as 
\begin{align*}\label{eqn:M_u}
\E[M_u] =& \bigg(\frac{2\rho'' }{-\pi\rho'}\bigg)^{N/2} \int_{D} e^{\frac{\theta''^2\|s-s_0\|^2}{4\rho'}} \int_{\tilde{u}(s)}^{\infty}  \phi\left(\tilde{x}+\eta\right) H(\tilde{x}) d\tilde{x} ds
\numberthis
\end{align*} 
where we make a change of variable $\tilde{x} = x-\theta(s)-\eta$ and $\tilde{u}(s) = u-\theta(s)-\eta$. Note that the parameter $\kappa$ depends on the correlation structure of $Z(s)$.

\subsection{Explicit formulas in 1D, 2D and 3D}

In \eqref{eqn:M_u}, a general formula for $\E[M_u]$ under isotropy was derived. To make the formula easier to apply in practice, we have the following results for computing it in 1D, 2D, and 3D. When $N = 1$, the derivation is simple enough that we do not need additional assumptions on the mean function $\theta(s)$ except those in Theorem \ref{thm:expected-number}, and it follows directly from Kac-Rice formula. When $N = 2$ and $3$, we assume the mean function $\theta(s)$ is a rotational symmetric paraboloid centered at $s_0$. $\E[M_u]$ is calculated by first obtaining explicit formulas for $H(\tilde{x})$, and plugging $H$ into \eqref{eqn:M_u}. 
\begin{prop}\label{prop:1d_explicit}
Let $N = 1$, $X(s)=  Z(s)+\theta(s)$, where $Z(s)$ is a smooth zero-mean unit-variance Gaussian process and $\theta(s)$ is a smooth mean function. Assume additionally that $Z(s)$ is stationary, then
\begin{equation}\label{eqn:M_u_1D}
\E[M_u] = \int_D \frac{\sqrt{-2\rho'(3-\kappa^2)}}{\kappa} \phi\left(\frac{\theta'(s)}{ \sqrt{-2\rho'} }\right)\int_{u}^{\infty} \phi(x-\theta(s)) \psi\left(\frac{\kappa[x-\theta(s)-\eta(s)]}{ \sqrt{3-\kappa^2}}\right) dx \, ds,
\end{equation}
where the function $\psi$ is defined as
\[
\psi(x) = \int_{-\infty}^x \Phi(y)dy = \phi(x) + x\Phi(x), \quad x\in \R.
\]
\end{prop}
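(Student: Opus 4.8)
The plan is to specialize the Kac--Rice representation of $\E[M_u]$ used in Theorem~\ref{thm:expected-number} to dimension $N=1$ and evaluate the resulting one-dimensional Gaussian integral in closed form. Under Assumptions~\ref{con:1}--\ref{con:2} together with stationarity of $Z$ (which guarantees the needed non-degeneracy, since $\rho'<0$ and $\rho''>0$), the Kac--Rice formula gives
\begin{equation*}
\E[M_u] = \int_D p_{X'(s)}(0)\, \E\!\left[\,|X''(s)|\,\mathbbm{1}_{\{X''(s)<0\}}\mathbbm{1}_{\{X(s)>u\}}\,\middle|\,X'(s)=0\right]ds.
\end{equation*}
Since $X'(s)=Z'(s)+\theta'(s)$ with $Z'(s)\sim\mathcal N(0,-2\rho')$ by Lemma~\ref{Lem:cov of isotropic Euclidean}, the first factor is $p_{X'(s)}(0)=(-2\rho')^{-1/2}\phi\big(\theta'(s)/\sqrt{-2\rho'}\big)$, which already yields the Gaussian factor $\phi(\theta'(s)/\sqrt{-2\rho'})$ appearing in \eqref{eqn:M_u_1D}.

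Next I would exploit the fact that, by Lemma~\ref{Lem:cov of isotropic Euclidean}, $Z'(s)$ has zero covariance with both $Z(s)$ and $Z''(s)$; being jointly Gaussian, $Z'(s)$ is therefore independent of $(Z(s),Z''(s))$. Hence conditioning on $X'(s)=0$ does not alter the joint law of $(X(s),X''(s))$, and the conditional expectation above equals the \emph{unconditional} expectation of $|Z''(s)+\theta''(s)|\,\mathbbm{1}_{\{Z''(s)+\theta''(s)<0\}}\mathbbm{1}_{\{Z(s)+\theta(s)>u\}}$. Conditioning further on $Z(s)=x-\theta(s)$ and applying Lemma~\ref{Lem:GOE for det Hessian}(ii) with $N=1$ — where a $\GOI((1-\kappa^2)/2)$ matrix is simply a scalar $\mathcal N\big(0,(3-\kappa^2)/2\big)$ — gives
\begin{equation*}
\big(X''(s)\,\big|\,X(s)=x\big)\ \sim\ \mathcal N\!\big(\,2\rho'(x-\theta(s))+\theta''(s),\ 4\rho''(3-\kappa^2)\,\big),
\end{equation*}
a formula one could equally obtain by direct Gaussian conditioning from the covariances in Lemma~\ref{Lem:cov of isotropic Euclidean}.

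It then remains to compute, for $W\sim\mathcal N(m,\tau^2)$, the truncated moment $\E[|W|\mathbbm{1}_{\{W<0\}}]=\tau\,\psi(-m/\tau)$ with $\psi(t)=\phi(t)+t\Phi(t)$ — a one-line Gaussian integration. Applying this with $\tau=2\sqrt{\rho''(3-\kappa^2)}$ and $m=m(x)=2\rho'(x-\theta(s))+\theta''(s)$, and substituting $\eta(s)=\theta''(s)/(-2\rho')$ and $\kappa=-\rho'/\sqrt{\rho''}$, one checks that $-m(x)/\tau=\kappa\,[x-\theta(s)-\eta(s)]/\sqrt{3-\kappa^2}$, exactly the argument of $\psi$ in \eqref{eqn:M_u_1D}. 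Integrating over $x>u$ against the marginal density $\phi(x-\theta(s))$ of $X(s)$ (again unaffected by conditioning, by the independence noted above) and collecting the constant $(-2\rho')^{-1/2}\cdot 2\sqrt{\rho''(3-\kappa^2)}=\sqrt{-2\rho'(3-\kappa^2)}/\kappa$ completes the derivation.

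The argument has no genuinely hard step; the points deserving care are (i) verifying that the hypotheses of Theorem~\ref{thm:expected-number} specialize correctly in 1D so that Kac--Rice applies and the integrand is finite, and (ii) the independence decoupling of $X'(s)$ from $(X(s),X''(s))$, which is what renders the integral elementary and is specific to the stationary/isotropic covariance structure. The one place to be vigilant is the final algebraic simplification linking $\tau$, $\kappa$, $\eta$, $\rho'$ and $\rho''$: because $\rho'<0$, $\rho''>0$ and $\kappa>0$, the sign conventions in $\sqrt{-2\rho'}$ must be tracked carefully to land on the stated constant.
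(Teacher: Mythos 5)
Your proposal is correct and follows essentially the same route as the paper: Kac--Rice in 1D, the decoupling of $X'(s)$ from $(X(s),X''(s))$ under stationarity, the conditional law $X''(s)\mid X(s)=x\sim N(2\rho'(x-\theta(s))+\theta''(s),\,12\rho''-4\rho'^2)$ (your $4\rho''(3-\kappa^2)$ is the same quantity), the truncated first moment expressed via $\psi$, and the final substitution of $\kappa$ and $\eta$. The only cosmetic difference is that you route the conditional distribution through Lemma~\ref{Lem:GOE for det Hessian}(ii) with $N=1$, whereas the paper obtains it by direct Gaussian conditioning; the computations and constants agree.
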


\begin{proof}

Since we assume that $Z(s)$ is stationary, $Z'(s)$ is independent of $Z(s)$ and $Z''(s)$, and $\rho'=-{\rm Var}(Z'(s))/2=\E[Z(s)Z''(s)]/2$ and $\rho''={\rm Var}(Z''(s))/12$ do not depend on $s$. Therefore,
\[
{\rm Var}(X(s))=1, \quad {\rm Var}(X'(s))=-{\rm Cov}[X(s)X''(s)]=-2\rho' \quad {\rm and} \quad  {\rm Var}(X''(s))=12\rho''.
\] 
Note that, by the formula of conditional Gaussian distributions, 
\[
X''(s)|X(s)=x \sim N(\theta''(s)+2\rho'(x-\theta(s)), 12\rho''-4\rho'^2).
\] 

By the Kac-Rice formula
\begin{align*}
\E[M_u] & = \int_D p_{X'(s)}(0) \E[|X''(s)|\mathbbm{1}_{\{X(s)> u\}}\mathbbm{1}_{\{X''(s)< 0\}} | X'(s)=0]ds \\
& = \int_D p_{X'(s)}(0) \int_{u}^{\infty} \phi(x-\theta(s)) \int_{-\infty}^0 (-x'')\frac{1}{ \sqrt{12\rho''-4\rho'^2}}\phi\left[\frac{x''-\theta''(s) -2\rho'(x-\theta(s))}{ \sqrt{12\rho''-4\rho'^2}}\right]dx'' \, dx \, ds \\
& = \int_D p_{X'(s)}(0)  \sqrt{12\rho''-4\rho'^2} \int_{u}^{\infty} \phi(x-\theta(s)) \psi\left(\frac{-2\rho'(x-\theta(s))-\theta''(s)}{ \sqrt{12\rho''-4\rho'^2}}\right) dx \, ds \\
& = \int_D \frac{1}{ \sqrt{-2\rho'}} \phi\left(\frac{\theta'(s)}{ \sqrt{-2\rho'} }\right)  \sqrt{12\rho''-4\rho'^2} \int_{u}^{\infty} \phi(x-\theta(s)) \psi\left(\frac{-2\rho'(x-\theta(s))-\theta''(s)}{ \sqrt{12\rho''-4\rho'^2}}\right) dx \, ds \\
& = \int_D \frac{\sqrt{12\rho''-4\rho'^2}}{\sqrt{-2\rho'}} \phi\left(\frac{\theta'(s)}{ \sqrt{-2\rho'} }\right) \int_{u}^{\infty} \phi(x-\theta(s))\psi\left(\frac{-2\rho'(x-\theta(s))-\theta''(s)}{ \sqrt{12\rho''-4\rho'^2}}\right) dx \, ds.
\end{align*}

The second to third line is due to the fact that
\[
\int_{-\infty}^0 (-x) \frac{1}{b} \phi\left(\frac{x+a}{b}\right) dx
= \int_{-\infty}^0 \Phi\left(\frac{x+a}{b}\right) dx
= b \int_{-\infty}^{a/b} \Phi(y) dy
= b \psi\left(\frac{a}{b}\right).
\]

Recall the $\kappa$ \eqref{Eq:kappa} and $\eta$ \eqref{eqn:eta} parameters defined before. We can rewrite $\E[M_u]$ as

\begin{equation*}
\int_D \frac{\sqrt{-2\rho'(3-\kappa^2)}}{\kappa} \phi\left(\frac{\theta'(s)}{ \sqrt{-2\rho'} }\right)\int_{u}^{\infty} \phi(x-\theta(s)) \psi\left(\frac{\kappa[x-\theta(s)-\eta(s)]}{ \sqrt{3-\kappa^2}}\right) dx \, ds.
\end{equation*}
This finishes the proof.
\end{proof}

Note that when $N = 1$,
\begin{equation}
H(\tilde{x}) = \phi\left(\frac{\kappa\tilde{x}}{ \sqrt{3-\kappa^2}}\right) + \frac{\kappa\tilde{x}}{ \sqrt{3-\kappa^2}}\Phi\left(\frac{\kappa\tilde{x}}{ \sqrt{3-\kappa^2}}\right) = \psi\left(\frac{\kappa\tilde{x}}{ \sqrt{3-\kappa^2}}\right) \\
\end{equation}
We need the following lemmas to calculate $H(\tilde{x})$ explicitly when $N = 2$ and $N = 3$. They are direct calculation of integral by parts.

\begin{lemma}\label{Lem:integral} Let $N = 2$, for constant $a > -\frac{1}{2}$ and $b \in \mathbb{R}$ 
\begin{multline}
	\int_{\mathbb{R}^2}\exp\bigg\{-\frac{1}{2}\sum_{i=1}^{2}\lambda_i^2 - \frac{a}{2}\big(\sum_{i=1}^2 \lambda_i \big)^2 \bigg\}\left(\prod_{i=1}^2|\lambda_i-b| \right)|\lambda_1-\lambda_2|\mathbbm{1}_{\{\lambda_1<\lambda_2<b\}}d\lambda_1 \, d\lambda_2 \\ 
	= \frac{\sqrt{2\pi}}{\sqrt{1+a}}e^{-\frac{1+2a}{2(1+a)}b^2}\Phi\left(\frac{1+2a}{\sqrt{1+a}}b\right)+\left(2b^2-\frac{1+4a}{1+2a}\right)\frac{\sqrt{\pi}}{\sqrt{1+2a}}\Phi(\sqrt{2(1+2a)}b)+\frac{b}{1+2a}e^{-(1+2a)b^2}
	\end{multline}
\end{lemma}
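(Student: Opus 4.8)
The plan is to collapse the two-dimensional integral to elementary one-dimensional Gaussian integrals by a linear change of variables that diagonalizes the quadratic form, and then to finish by completing squares and integrating by parts. First I would remove the absolute values: on $\{\lambda_1<\lambda_2<b\}$ one has $|\lambda_i-b|=b-\lambda_i$ and $|\lambda_1-\lambda_2|=\lambda_2-\lambda_1$, so the integrand equals $(b-\lambda_1)(b-\lambda_2)(\lambda_2-\lambda_1)\,\exp\{-\tfrac12(\lambda_1^2+\lambda_2^2)-\tfrac a2(\lambda_1+\lambda_2)^2\}$. Then I would substitute $u=\lambda_1+\lambda_2$, $v=\lambda_2-\lambda_1$ (Jacobian $1/2$), which turns the exponent into $-\tfrac{1+2a}{4}u^2-\tfrac14 v^2$ and the polynomial factor into $\tfrac v4\big[(2b-u)^2-v^2\big]$ over the region $\{v>0,\ u<2b-v\}$; a further shift $w=2b-u$ maps this region to $\{0<v<w\}$ and leaves the integral in the form $\tfrac18\int_0^\infty\!\big(\int_0^w v(w^2-v^2)e^{-v^2/4}\,dv\big)\,e^{-\frac{1+2a}{4}(2b-w)^2}\,dw$.

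The inner $v$-integral is elementary --- with $t=v^2/4$ it equals $2w^2-8+8e^{-w^2/4}$ --- so the quantity to evaluate reduces, up to the factor $1/4$, to $\int_0^\infty\!(w^2-4+4e^{-w^2/4})\,e^{-c(w-2b)^2}\,dw$ with $c=(1+2a)/4$; the hypothesis $a>-\tfrac12$ is exactly what guarantees $c>0$ and convergence. I would split this into the three pieces $\int_0^\infty e^{-c(w-2b)^2}dw$, $\int_0^\infty w^2 e^{-c(w-2b)^2}dw$ and $\int_0^\infty e^{-w^2/4}e^{-c(w-2b)^2}dw$, each computed by translating $w$ to remove the linear term and invoking $\int_m^\infty e^{-\beta t^2}\,dt=\sqrt{\pi/\beta}\,\Phi(m\sqrt{2\beta})$. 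The $w^2$-integral additionally needs one integration by parts (writing the translated quadratic term as $t\cdot t e^{-\beta t^2}$ plus lower order), and the mixed Gaussian requires completing the square in $w$, where the identity $c+\tfrac14=\tfrac{1+a}{2}$ produces the factor $\sqrt{1+a}$ and the exponent $\tfrac{1+2a}{2(1+a)}b^2$ appearing in the first term of the claimed formula.

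Finally I would collect the three contributions, multiply by $1/4$, and simplify to recover the stated right-hand side. The only mildly nontrivial algebra is rewriting $\tfrac{2}{1+2a}+4b^2-4$ as $2b^2-\tfrac{1+4a}{1+2a}$ after factoring out $\sqrt\pi/\sqrt{1+2a}$, and checking that the three normal-CDF arguments collapse to $b\sqrt{2(1+2a)}$ and $\tfrac{(1+2a)b}{\sqrt{1+a}}$. I do not expect any conceptual obstacle; the whole proof is a bookkeeping exercise, so the main practical risk is an arithmetic slip in tracking constants through the two substitutions and the completion of the square. I would guard against this by verifying the identity at $b=0$, where both sides collapse to a single closed-form Gaussian moment integral, and by a numerical spot-check (e.g.\ $a=0$).
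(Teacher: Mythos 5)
Your proposal is correct and takes the route the paper intends: the paper omits the computation entirely, describing Lemmas \ref{Lem:integral}--\ref{Lem:integral_3} only as ``direct calculation of integral by parts,'' and your reduction via $(u,v)=(\lambda_1+\lambda_2,\lambda_2-\lambda_1)$ followed by $w=2b-u$ to $\tfrac14\int_0^\infty(w^2-4+4e^{-w^2/4})\,e^{-c(w-2b)^2}\,dw$ with $c=(1+2a)/4$ does reproduce the stated right-hand side term by term (the mixed Gaussian piece gives the $\sqrt{1+a}$ term, the $w^2$ and constant pieces combine into the $\sqrt{1+2a}$ term and the exponential). One small correction: the one-dimensional identity you invoke should read $\int_m^\infty e^{-\beta t^2}\,dt=\sqrt{\pi/\beta}\,\Phi(-m\sqrt{2\beta})$; with the translated lower limit $m=-2b$ this produces the arguments $b\sqrt{2(1+2a)}$ and $(1+2a)b/\sqrt{1+a}$ with the correct sign.
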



\begin{lemma}\label{Lem:integral_3} Let $N = 3$, for constant $a > 0$ and $b \in \mathbb{R}$ 
\begin{align*}
	&\int_{\mathbb{R}^3}\exp\bigg\{-\frac{1}{2}\sum_{i=1}^{3}\lambda_i^2 + \frac{a}{2(2+3a)}\big(\sum_{i=1}^3 \lambda_i \big)^2 \bigg\}\left(\prod_{i=1}^3|\lambda_i-b| \right)\prod_{1\leq i<j\leq 3}|\lambda_i-\lambda_j|\mathbbm{1}_{\{\lambda_1<\lambda_2<\lambda_3<b\}}d\lambda_1 \, d\lambda_2 \, d\lambda_3\\ 
	= &\left[\frac{a^3+6a^2+12a+24}{2(a+2)^2}b^2+\frac{2a^3+3a^2+6a}{4(a+2)}+\frac{3}{2}\right]\frac{1}{\sqrt{\pi(a+2)}}e^{-\frac{b^2}{a+2}}\Phi\left(\frac{2\sqrt{2}b}{\sqrt{(a+2)(3a+2)}}\right) \\
	&+\left[\frac{a+1}{2}b^2+\frac{a^2-a}{2}-1\right]\frac{1}{\sqrt{\pi(a+1)}}e^{-\frac{b^2}{a+1}}\Phi\left(\frac{\sqrt{2}b}{\sqrt{(a+1)(3a+2)}}\right) \\
	&+\left(a+6+\frac{3a^3+12a^2+28a}{2(a+2)}\right)\frac{b}{2\pi(a+2)\sqrt{3a+2}}e^{-\frac{3b^2}{3a+2}} \\
	&+b\left[b^2+\frac{3(a-1)}{2}\right][\Phi_{\Sigma_1}(0,b)+\Phi_{\Sigma_2}(0,b)]
	\end{align*}
where
\begin{equation*}
\begin{split}
 \Sigma_1 = \left(
\begin{array}{cc}
\frac{3}{2} & -1 \\
-1 & \frac{a+2}{2} \\
\end{array}
\right), \quad  \Sigma_2 = \left(
\begin{array}{cc}
\frac{3}{2} & -\frac{1}{2} \\
-\frac{1}{2} & \frac{a+1}{2} \\
\end{array}
\right).
\end{split}
\end{equation*}
\end{lemma}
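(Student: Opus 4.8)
The plan is to evaluate the integral by reducing the three-dimensional integral over the ordered simplex $\{\lambda_1<\lambda_2<\lambda_3<b\}$ to a product of a Gaussian quadratic form and the Vandermonde/peak-factor polynomial, and then to integrate out variables one at a time. First I would symmetrize: since the integrand is symmetric in $\lambda_1,\lambda_2,\lambda_3$ except for the ordering indicator, and the product $\prod_{i<j}|\lambda_i-\lambda_j|$ together with $\prod_i|\lambda_i-b|$ is sign-definite on each chamber, I would replace $\int_{\{\lambda_1<\lambda_2<\lambda_3<b\}}$ by $\tfrac{1}{3!}\int_{\{\max_i\lambda_i<b\}}$ of the same integrand with $|\lambda_i-\lambda_j|$ and $|\lambda_i-b|$ turned into the signed products $\prod_{i<j}(\lambda_j-\lambda_i)$-type expressions; this removes the ordering constraint at the cost of a combinatorial factor and a sign bookkeeping that must be tracked carefully.

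Next I would diagonalize the quadratic form. The exponent $-\tfrac12\sum\lambda_i^2 + \tfrac{a}{2(2+3a)}(\sum\lambda_i)^2$ is a rank-one perturbation of $-\tfrac12 I_3$, so I would pass to coordinates aligned with $v=(1,1,1)/\sqrt3$ and its orthogonal complement. In those coordinates the Gaussian factorizes, and the key observation (this is exactly the structure behind the $\GOI$ computation in \citet{Bernoulli}) is that $\prod_{i<j}(\lambda_i-\lambda_j)$ depends only on the components orthogonal to $v$, while the peak factor $\prod_i(\lambda_i-b)$ mixes the two pieces through $\sum\lambda_i$. I would expand $\prod_{i=1}^3(\lambda_i-b)$ as a polynomial in the elementary symmetric functions $e_1=\sum\lambda_i$, $e_2$, $e_3$ and re-express $e_2,e_3$ in terms of $e_1$ and the power sums / orthogonal coordinates, so the whole integrand becomes (Gaussian)$\times$(polynomial in $e_1$)$\times$(Vandermonde in the orthogonal variables), with the upper limit $\max_i\lambda_i<b$ coupling them.

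The main obstacle will be handling that coupled constraint $\max_i\lambda_i<b$, which does not separate in the diagonalizing coordinates: this is what produces the bivariate Gaussian CDF terms $\Phi_{\Sigma_1}(0,b)$ and $\Phi_{\Sigma_2}(0,b)$. I expect to handle it by writing $\mathbbm{1}_{\{\max_i\lambda_i<b\}}$ via inclusion–exclusion over the half-spaces $\{\lambda_i<b\}$ after the symmetrization, or equivalently by integrating the innermost (largest) variable up to $b$ and then the next up to that value, generating, at each stage, one-dimensional Gaussian integrals against $\Phi$ of a linear function — these are exactly the integrals $\int\Phi(y)dy=\psi(y)$ and its iterates used in Proposition~\ref{prop:1d_explicit}, now producing products $\Phi(\cdot)e^{-c b^2}$ and, when two linear constraints survive, a genuine bivariate normal probability with covariance read off from the Schur complements of the quadratic form restricted to the two active variables — hence the explicit $\Sigma_1,\Sigma_2$. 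The remaining work is bookkeeping: collecting the polynomial-in-$b$ prefactors, the normalizing constants $\sqrt{\pi(a+2)}$, $\sqrt{\pi(a+1)}$, $\sqrt{3a+2}$ coming from the eigenvalues $\tfrac{a+2}{2},\tfrac{a+1}{2},\tfrac{3a+2}{2}$ (up to scaling) of the relevant $2\times2$ blocks, and matching the four groups of terms in the claimed identity. I would verify the final expression by a numerical check at, say, $a=1$ and a couple of values of $b$, and by the sanity limit $a\to\infty$, which should collapse the bivariate terms and reproduce a degenerate one-dimensional formula consistent with the $N=2$ case in Lemma~\ref{Lem:integral}.
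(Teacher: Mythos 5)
Your strategy sketch is broadly aligned with what the paper actually does: the paper offers no more than the remark that Lemmas~\ref{Lem:integral} and~\ref{Lem:integral_3} ``are direct calculation of integral by parts,'' and your plan --- symmetrize over the ordering to trade the simplex for $\{\max_i\lambda_i<b\}$ at the cost of $1/3!$, isolate the rank-one perturbation of the quadratic form, expand $\prod_i(\lambda_i-b)$ in symmetric functions, and integrate one variable at a time so that each stage produces $\Phi$'s, $\psi$-type antiderivatives, and finally bivariate normal probabilities from the two surviving linear constraints --- is a correct description of how such a computation proceeds and correctly anticipates where the $\Phi_{\Sigma_1}$, $\Phi_{\Sigma_2}$ terms come from.

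The gap is that the entire content of the lemma is the explicit closed form, and your proposal never produces it. Everything that distinguishes this identity from a generic ``Gaussian times polynomial over a region'' computation --- the coefficient $\frac{a^3+6a^2+12a+24}{2(a+2)^2}$, the three distinct variance parameters $a+2$, $a+1$, $3a+2$, the specific off-diagonal entries $-1$ and $-\tfrac12$ of $\Sigma_1$ and $\Sigma_2$ --- is deferred to ``bookkeeping,'' and you propose to certify the result by numerical spot checks at $a=1$, which is evidence rather than proof. Two of your stated checkpoints are also shaky: the quantities $\tfrac{a+2}{2},\tfrac{a+1}{2},\tfrac{3a+2}{2}$ are not eigenvalues of $2\times2$ blocks of the diagonalized form (note instead that $\det\Sigma_1=\det\Sigma_2=\tfrac{3a+2}{4}$, which is where $\sqrt{3a+2}$ enters), and the proposed sanity limit $a\to\infty$ cannot reproduce Lemma~\ref{Lem:integral}, because the two lemmas parametrize the quadratic form incompatibly (the $N=2$ exponent carries $-\tfrac{a}{2}(\sum_i\lambda_i)^2$ with $a>-\tfrac12$ and $a=(\kappa^2-1)/(4-2\kappa^2)$ in the application, while the $N=3$ exponent carries $+\tfrac{a}{2(2+3a)}(\sum_i\lambda_i)^2$ with $a=1-\kappa^2$), so the limit does not degenerate one formula into the other. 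To turn the proposal into a proof you would need to actually carry out the iterated integrations by parts, in the manner of the analogous appendix computations of \citet{Bernoulli}, and exhibit the resulting coefficients term by term.
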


\begin{prop}\label{prop:2d_explicit}
Let $N = 2$, and assumptions in Theorem \ref{thm:expected-number} hold. Then the function $H$ defined in \eqref{eqn:H_tilde} can be written explicitly as
\begin{align*}
\label{eqn:H_2D}
H(\tilde{x}) = & \frac{\sqrt{2\pi}}{\sqrt{3-\kappa^2}}\phi\left(\frac{\kappa\tilde{x}}{\sqrt{3-\kappa^2}}\right) \Phi\left(\frac{\kappa\tilde{x}}{\sqrt{(2-\kappa^2)(3-\kappa^2)}}\right) + \frac{\kappa^2}{2}(\tilde{x}^2-1)\Phi\left(\frac{\kappa\tilde{x}}{\sqrt{2-\kappa^2}}\right) + \frac{\kappa\sqrt{2-\kappa^2}\tilde{x}}{2} \phi\left(\frac{\kappa\tilde{x}}{\sqrt{2-\kappa^2}}\right)
\numberthis
\end{align*}

\end{prop}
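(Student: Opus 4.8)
The plan is to specialize the general formula \eqref{eqn:H_tilde} to $N=2$ and reduce the expectation over a $\GOI((1-\kappa^2)/2)$ matrix to an explicit two-dimensional integral over its ordered eigenvalues, which is precisely the integral evaluated in Lemma \ref{Lem:integral}. First I would recall the joint density of the ordered eigenvalues $\lambda_1 \le \lambda_2$ of a $2\times 2$ GOI matrix with covariance parameter $c = (1-\kappa^2)/2$: up to a normalizing constant it is proportional to $|\lambda_1-\lambda_2|\exp\{-\tfrac{1}{2(1+2c)}\sum\lambda_i^2 \cdot(\text{appropriate form})\}$, or more precisely, writing the GOI density in terms of its entries and diagonalizing, the eigenvalue density carries the Vandermonde factor $|\lambda_1-\lambda_2|$ and a Gaussian weight of the form $\exp\{-\tfrac12\sum\lambda_i^2 - \tfrac{a}{2}(\sum\lambda_i)^2\}$ with $a$ chosen so that $c$ matches; tracking constants gives $a = c/(1+\cdots)$ — I would pin this down so that the Gaussian weight in $H(\tilde x)$ matches the left side of Lemma \ref{Lem:integral}. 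Substituting into \eqref{eqn:H_tilde}, the product $\prod_{j=1}^2 |\lambda_j - \kappa\tilde x/\sqrt2|$ and the indicator $\{\lambda_2 < \kappa\tilde x/\sqrt2\}$ appear exactly as in the lemma with $b = \kappa\tilde x/\sqrt2$.

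Next I would apply Lemma \ref{Lem:integral} verbatim with that choice of $a$ and $b$, obtaining a closed form in terms of $\Phi$, $\phi$ (or exponentials) and powers of $b$. Then it remains to convert back: substitute $b = \kappa\tilde x/\sqrt 2$, express $1+a$ and $1+2a$ in terms of $\kappa^2$ (I expect $1+2a$ to correspond to something like $(3-\kappa^2)/2$ or $2-\kappa^2$ after rescaling, since those are the quantities appearing in the stated answer), rewrite the Gaussian exponentials $e^{-(\cdot)b^2}$ as $\sqrt{2\pi}\,\phi(\cdot)$ with the correct argument, and collect the $\tilde x^2$, $\tilde x$, and constant terms. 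Matching the three summands in \eqref{eqn:H_2D} — the $\phi\cdot\Phi$ term, the $(\tilde x^2-1)\Phi$ term, and the $\tilde x\,\phi$ term — then becomes bookkeeping. I would also fold in the overall normalization constant of the eigenvalue density that was stripped off at the start; consistency of that constant is automatically checked by confirming $H$ has the right behavior, e.g. $H(\tilde x)\to 0$ as $\tilde x\to-\infty$ and the known $N=1$-type structure.

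The main obstacle is the constant-chasing in the first step: correctly normalizing the $2\times 2$ GOI eigenvalue density with covariance parameter $(1-\kappa^2)/2$ so that its Gaussian weight is exactly $\exp\{-\tfrac12\sum\lambda_i^2-\tfrac a2(\sum\lambda_i)^2\}$ for the right $a$, and carrying the resulting prefactor through Lemma \ref{Lem:integral} without error. A secondary subtlety is the condition $a > -\tfrac12$ required by Lemma \ref{Lem:integral}: since $\kappa \in (0,1)$ for a genuine isotropic field (as noted after Lemma \ref{Lem:cov of isotropic Euclidean}, $\kappa>0$, and $\kappa\le 1$ by Cauchy--Schwarz), one checks $c=(1-\kappa^2)/2 \ge 0$ and hence the corresponding $a$ indeed satisfies the hypothesis, so the lemma applies. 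Everything after that is algebraic simplification to bring the expression into the displayed form.
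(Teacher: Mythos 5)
Your proposal follows essentially the same route as the paper: write the expectation in \eqref{eqn:H_tilde} against the joint eigenvalue density of the $2\times2$ $\GOI((1-\kappa^2)/2)$ matrix (the paper invokes Lemma 2.2 of \citet{Bernoulli} for that density, whose Gaussian weight matches the form in Lemma \ref{Lem:integral} with $a=(\kappa^2-1)/(4-2\kappa^2)$ and $b=\kappa\tilde x/\sqrt2$), apply Lemma \ref{Lem:integral}, and simplify. One tiny caveat: your side remark that $\kappa\le 1$ by Cauchy--Schwarz is not quite right (the constraint is $\kappa^2\le (N+2)/N$, and e.g.\ $\kappa=1$ exactly for a Gaussian covariance), but this is immaterial since $a>-\tfrac12$ holds for every admissible $\kappa$ with $\kappa^2<2$.
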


\begin{proof} 

By Lemma \ref{Lem:integral} above with $a = (\kappa^2-1)/(4-2\kappa^2)$ and $b = \kappa\tilde{x}/\sqrt{2}$, and Lemma 2.2 in \cite{Bernoulli} , 

\begin{align*} \label{eq:2d_num}
& \E_{\GOI((1-\kappa^2)/2)}^N \left[\prod_{j=1}^N \left|\lambda_j - \frac{\kappa\tilde{x}}{\sqrt{2}}\right|I\left(\lambda_N< - \frac{\kappa\tilde{x}}{\sqrt{2}}\right)\right] \\
&= \frac{1}{2\sqrt{\pi(2-\kappa^2)}} \int_{\mathbb{R}^2}
\exp\bigg\{-\frac{1}{2}(\lambda_1^2+\lambda_2^2)+\frac{1-\kappa^2}{4(2-\kappa^2)}(\lambda_1+\lambda_2)^2\bigg\}(\lambda_2-\lambda_1) \\
& \quad \left|\lambda_1-\frac{\kappa\tilde{x}}{\sqrt{2}}\right| \left|\lambda_2-\frac{\kappa\tilde{x}}{\sqrt{2}}\right|\mathbbm{1}_{\{\lambda_2<\frac{\kappa\tilde{x}}{\sqrt{2}}\}}d\lambda_1 \, d\lambda_2 \\ 
& = \frac{1}{\sqrt{3-\kappa^2}}e^{-\frac{\kappa^2\tilde{x}^2}{2(3-\kappa^2)}}\Phi\left(\frac{\kappa\tilde{x}}{\sqrt{(2-\kappa^2)(3-\kappa^2)}}\right) 
+ \quad \frac{\kappa^2}{2}\left( \tilde{x}^2-1\right)\Phi\left(\frac{\kappa\tilde{x}}{\sqrt{2-\kappa^2}}\right) 
+ \frac{\kappa\sqrt{2-\kappa^2}\tilde{x}}{2\sqrt{2\pi}}e^{-\frac{\kappa^2\tilde{x}^2}{2(2-\kappa^2)}}. \numberthis
\end{align*}

This simplifies to \eqref{eqn:H_2D}.

\end{proof}

\begin{prop}\label{prop:3d_explicit}
When $N = 3$, let assumptions in Theorem \ref{thm:expected-number} hold. Then the function $H$ defined in \eqref{eqn:H_tilde} can be written explicitly as
\begin{align*}
H(\tilde{x})
& = \Bigg[\frac{\kappa^2\left[(1-\kappa^2)^3+6(1-\kappa^2)^2+12(1-\kappa^2)+24\right]}
{4(3-\kappa^2)^2}\tilde{x}^2 \\
&\quad+ \frac{2(1-\kappa^2)^3+3(1-\kappa^2)^2+6(1-\kappa^2)}{4(3-\kappa^2)} + \frac{3}{2}\Bigg] \frac{\phi(\frac{\kappa \tilde{x}}{\sqrt{3-\kappa^2}})}{\sqrt{\pi(3-\kappa^2)}}\Phi\left(\frac{2\kappa \tilde{x}}{\sqrt{(3-\kappa^2)(5-3\kappa^2)}}\right)\\
&\quad + \left[\frac{\kappa^2(2-\kappa^2)}{4}\tilde{x}^2 - \frac{\kappa^2(1-\kappa^2)}{2} -1\right] \frac{\phi(\frac{\kappa \tilde{x}}{\sqrt{2-\kappa^2}})}{\sqrt{\pi(2-\kappa^2)}}\Phi\left(\frac{\kappa \tilde{x}} {\sqrt{(2-\kappa^2)(5-3\kappa^2)}}\right)\\
&\quad + \left[7-\kappa^2 + \frac{(1-\kappa^2)\left[3(1-\kappa^2)^2 + 12(1-\kappa^2) + 28\right]}{2(3-\kappa^2)} \right] \frac{\kappa \tilde{x}\phi(\sqrt{\frac{3}{5-3\kappa^2}}\kappa\tilde{x})}{2\sqrt{2}\pi(3-\kappa^2)\sqrt{5-3\kappa^2}}\\
&\quad + \frac{\kappa^3}{2\sqrt{2}}\tilde{x}(\tilde{x}^2 - 3)\left[\Phi_{ \Sigma_1}(0,\kappa \tilde{x}/\sqrt{2}) + \Phi_{ \Sigma_2}(0,\kappa \tilde{x}/\sqrt{2})\right],
\end{align*}

where
\begin{equation*}
\begin{split}
 \Sigma_1 = \left(
\begin{array}{cc}
\frac{3}{2} & -1 \\
-1 & \frac{3-\kappa^2}{2} \\
\end{array}
\right), \quad  \Sigma_2 = \left(
\begin{array}{cc}
\frac{3}{2} & -\frac{1}{2} \\
-\frac{1}{2} & \frac{2-\kappa^2}{2} \\
\end{array}
\right).
\end{split}
\end{equation*}
\end{prop}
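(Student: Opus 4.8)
The plan is to follow exactly the template used in the proof of Proposition \ref{prop:2d_explicit}, but now in dimension $N=3$. First I would invoke Lemma 2.2 of \citet{Bernoulli} to rewrite the GOI expectation $H(\tilde{x})$ defined in \eqref{eqn:H_tilde} as an explicit integral over $\mathbb{R}^3$ against the joint density of the ordered eigenvalues of a $\GOI((1-\kappa^2)/2)$ matrix. Concretely, that density is proportional to $\exp\{-\tfrac12\sum_{i=1}^3\lambda_i^2 + c'(\sum_i\lambda_i)^2\}\prod_{i<j}|\lambda_i-\lambda_j|$ on the chamber $\lambda_1<\lambda_2<\lambda_3$, with the coupling constant $c'$ a known function of $\kappa^2$; the indicator $\mathbbm{1}_{\{\lambda_3<\kappa\tilde x/\sqrt2\}}$ and the product weight $\prod_j|\lambda_j-\kappa\tilde x/\sqrt2|$ come directly from \eqref{eqn:H_tilde}. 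This puts $H(\tilde x)$ into precisely the shape of the left-hand side of Lemma \ref{Lem:integral_3}.

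Next I would match parameters. Comparing the exponent $-\tfrac12\sum\lambda_i^2 + \tfrac{a}{2(2+3a)}(\sum\lambda_i)^2$ in Lemma \ref{Lem:integral_3} against the GOI eigenvalue density, I would solve for $a$ in terms of $\kappa$ — I expect $a = 1-\kappa^2$, which is consistent with the substitutions $3-\kappa^2 = a+2$, $2-\kappa^2 = a+1$, $5-3\kappa^2 = 3a+2$ appearing throughout the claimed formula — and set $b = \kappa\tilde x/\sqrt2$, exactly as in the 2D case. With these identifications, Lemma \ref{Lem:integral_3} delivers the value of the integral as a sum of four terms, and I would then track the overall normalizing prefactor from Lemma 2.2 of \citet{Bernoulli} (a constant times a power of $2-\kappa^2$ or its 3D analogue) and carry it through each term.

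The remaining work is bookkeeping: substitute $b=\kappa\tilde x/\sqrt2$ and $a=1-\kappa^2$ into each of the four terms of Lemma \ref{Lem:integral_3}, rewrite every Gaussian exponential $e^{-b^2/(a+2)}$, $e^{-b^2/(a+1)}$, $e^{-3b^2/(3a+2)}$ as a multiple of $\phi$ of the appropriate argument (absorbing the $\sqrt{2\pi}$ factors), simplify the polynomial coefficients in $\tilde x^2$, and verify that the covariance matrices $\Sigma_1,\Sigma_2$ in Lemma \ref{Lem:integral_3} become those stated in the proposition after the substitution $a+2 = 3-\kappa^2$, $a+1 = 2-\kappa^2$. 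The bivariate normal CDF term $\Phi_{\Sigma_i}(0,b)$ becomes $\Phi_{\Sigma_i}(0,\kappa\tilde x/\sqrt2)$ directly. I expect the main obstacle to be purely computational stamina — correctly propagating the constant from Lemma 2.2 of \citet{Bernoulli} and not dropping factors of $\sqrt2$, $\pi$, or powers of $(3-\kappa^2)$ — rather than any conceptual difficulty; the one genuine check worth doing is confirming that the $N=1$ and $N=2$ specializations of the method (already recorded in the text) are recovered by the analogous lower-dimensional integral lemmas, which lends confidence that the parameter matching $a=1-\kappa^2$, $b=\kappa\tilde x/\sqrt2$ is the right one here as well.
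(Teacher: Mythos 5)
Your proposal is correct and follows exactly the paper's own route: the paper's proof is literally the one-line statement that the result follows from Lemma \ref{Lem:integral_3} with $a=1-\kappa^2$ and $b=\kappa\tilde{x}/\sqrt{2}$, combined with the GOI eigenvalue density (Lemma 2.2 of \citealp{Bernoulli}), which is precisely your parameter matching. Your consistency checks ($a+2=3-\kappa^2$, $a+1=2-\kappa^2$, $3a+2=5-3\kappa^2$, and agreement with the $N=2$ case) confirm the identification, so nothing is missing.
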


\begin{proof}
This is a direct result of Lemma \ref{Lem:integral_3} above with $a = 1-\kappa^2$ and $b = \kappa\tilde{x}/\sqrt{2}$.
\end{proof}
Note that for $N = 2$ and $N = 3$, we need to solve an integral over the domain (see \eqref{eqn:M_u}) to get $\E{[M_u]}$. Although we can not derive the explicit form for the entire formula, this can be evaluated in applications with the help of numerical algorithms. 

\subsection{Isotropic unimodal mean function}

We have calculated the explicit formulas assuming the mean function is a concave paraboloid. This is a very strong assumption. However, in a general setting, where the unimodal mean function is rotationally symmetric of any shape, we can apply a multivariate Taylor expansion at the peak and use the second-order approximation to estimate power. For example, suppose the shape of the mean function is proportional to a rotational symmetric Gaussian density
\begin{equation}
\theta(s) = \theta_0\exp\left(-\frac{\|s-s_0\|^2}{2\xi^2} \right)
\label{eqn:gauss_mean}
\end{equation}
where $s_0$ is the center of the mean function and $\xi$ is the signal bandwidth. The Taylor expansion at the center is 
\begin{equation}
\theta(s) =  \theta_0 -\frac{\theta_0 }{2\xi^2}  \|s-s_0\|^2 + o(\|s-s_0\|^2)
\label{eqn:quadratic_approx}
\end{equation}
When the domain size gets small, we neglect the remainder term, and use its quadratic approximation as the mean function. With quadratic mean function, it becomes convenient to use compute $\E{[M_u]}$. We will evaluate the performance of this approach for different domain sizes in the simulation study.

\section{Simulations}\label{sec:sim}

In Section \ref{sec:power_approx} above, we discussed power approximation under different scenarios. We showed the factorial moment $\E[M_u(M_u-1)]$ decays faster than $\E[M_u]$ under some circumstances so that we can use $\E[M_u]$ or adjusted $\E[M_u]$ to approximate power. In this section, a simulation study is conducted to validate each scenario as well as visualize the power function, $\E[M_u]$, and adjusted $\E[M_u]$. Through simulation, we could also get a better sense of applying them to real data.

\subsection{Paraboloidal mean function}

We generate $B=100,000$ centered, unit-variance, smooth isotropic 2D Gaussian random fields over a grid of size $50 \times 50$ pixels as $Z(s)$, each field obtained as the convolution of white Gaussian noise
with a Gaussian kernel of spatial standard deviation 5, and normalized to standard deviation $\sigma = 1$. For the mean function $\mu(s)$, we use a concave paraboloid centered at $s_0=(25,25)$. The equation of the paraboloid is
\begin{equation}
\theta(s) =  \theta_0 - \frac{\|s-s_0\|^2}{2\xi^2} 
\label{eqn:mean_2d}
\end{equation}
where $\xi$ controls the sharpness of the mean function. The smaller $\xi$ is, the sharper the paraboloid will be. $\theta_0$ controls the height of the signal. To maintain the rotational symmetric property of $\theta(s)$, we only consider those circles centered at $s_0$ as domain $D$. The size of $D$ is measured by the radius $\text{Rad}(D)$. The default value of each parameter is listed in Table unless otherwise specified.
\begin{table}[h]
\centering
\begin{tabular}{c
                c
                c
} \toprule
Parameter      &  Default value \\ \toprule
$\text{Rad}(D)$ & 10 \\
$\xi$ & 7\\
$\theta_0$  & 3\\
\bottomrule
\end{tabular} 
\caption{2D simulation: default value of each parameter}
\label{tab:sim_null_short_term}
\end{table} 

The first two panels of Figure \ref{fig:Z&mu} display two instances of $\theta(s)$ and $Z(s)$ respectively. The third panel displays the resulting sum $X(s)$ which is calculated by the signal-plus-noise model. 

In the simulation, we validate and visualize the scenarios presented above, and check the effect of different choices of parameters on the power function, $\E[M_u]$ and adjusted $\E[M_u]$. Four different scenarios are considered as we discussed in Section \ref{sec:power_approx}:
\begin{enumerate}
\item Height equivariance
\item Small domain size Rad(D)
\item Large threshold $u$  
\item Sharp signal (small $\xi$)
\begin{figure}
\centering
\begin{subfigure}[b]{0.3\textwidth}
	\centering
	\includegraphics[scale=0.3,valign=t]{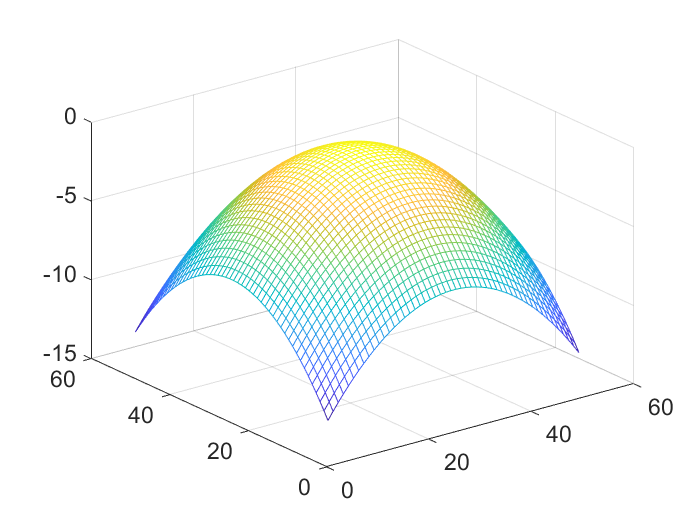}
	\caption{Mean function $\theta(s)$}
	\label{fig:Z&mu_theta}
\end{subfigure}
\begin{subfigure}[b]{0.3\textwidth}
	\centering
	\includegraphics[scale=0.3,valign=t]{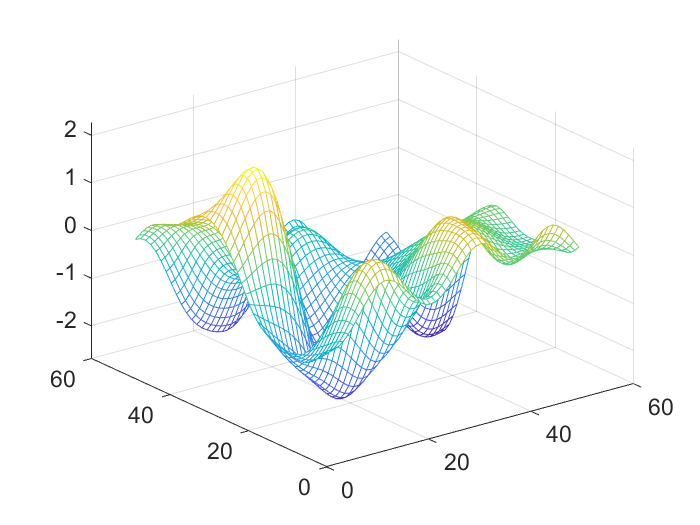}
	\caption{Noise $Z(s)$}
	\label{fig:Z&mu_Z}
\end{subfigure}
\begin{subfigure}[b]{0.3\textwidth}
	\includegraphics[scale=0.3,valign=t]{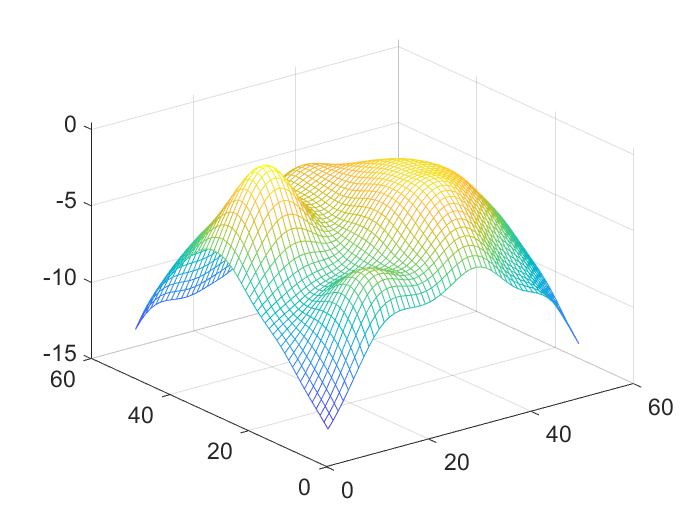}
	\caption{Data $X(s)$}
	\label{fig:Z&mu_X}
\end{subfigure}
\caption{2D simulation: a single instance of $\theta(s)$, $Z(s)$ and their resulting $X(s)$.}
\label{fig:Z&mu}
\end{figure}


\end{enumerate}

 For each simulated random field, we record the height of its highest peak if there exists at least one, and then for any threshold $u$, we calculate the empirical estimate of detection power (\ref{eqn:power_def}) and $\E[M_u]$:
\begin{align}\label{eqn:empirical_power}
 \hat{\P}[M_u \ge 1] &= \frac{1}{B}\sum_{i=1}^B \mathbbm{1}(\exists \text{ a peak in D with height } >u\text{ for $i$th simulated sample}),
\end{align}
\begin{align}\label{eqn:empirical_mu}
\hat{\E}[M_u] &= \frac{1}{B}\sum_{i=1}^B \text{\texttt{\#} peaks in $D$ with height } >u \text{ for $i$th simulated sample}.
\end{align}

Figure \ref{fig:power} displays the power, $\E[M_u]$ and adjusted $\E[M_u]$ curves under the four scenarios. The first panel is to validate scenario 1 (height equivariance). As stipulated by Proposition \ref{prop:height_equi}, the power, $\E[M_u]$ and adjusted $\E[M_u]$ curves are parallel for different signal height $h$ having other parameters remain the same. In the second panel, both the $\E[M_u]$ and adjusted $\E[M_u]$ curve are close to the power curve under scenario 2 (small domain) which indicates that for a smaller domain (quantified by Rad($D$)), using $\E[M_u]$ and adjusted $\E[M_u]$ to approximate power becomes more accurate as stipulated by Theorem \ref{thm:small_domain}. We can also find in all three panels that when $u$ is large enough, the $\E[M_u]$ curve converges to the power curve as $u$ increases, as stipulated by Theorem \ref{thm:large_u}. The adjusted $\E[M_u]$ curve also converges to the power curve but with a slower rate compared to $\E[M_u]$. The third panel shows the power, $\E[M_u]$ and adjusted $\E[M_u]$ curve all converge to the Gaussian CDF for sharp signal (small $\xi$), as stipulated by Theorem \ref{thm:sharp_signal}.

\begin{figure}
\centering
\begin{subfigure}[b]{0.4\textwidth}
	\centering
	\includegraphics[scale=0.35,valign=t]{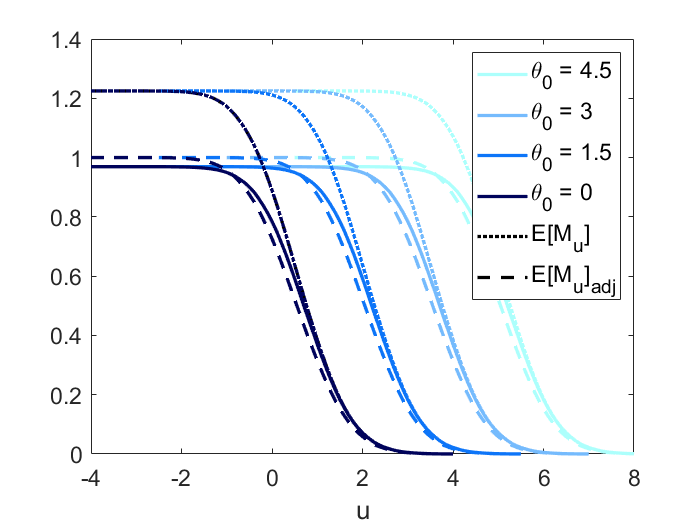}
	\caption{Height equivariance}
	\label{fig:power_height}
\end{subfigure}
\begin{subfigure}[b]{0.4\textwidth}
	\centering
	\includegraphics[scale=0.35,valign=t]{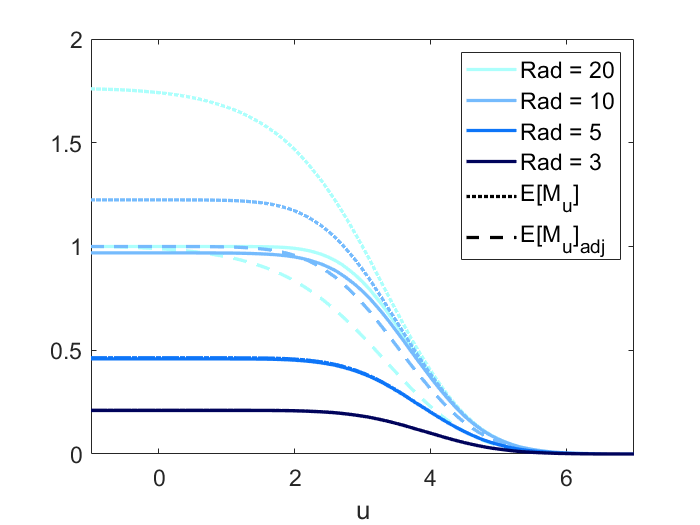}
	\caption{Small domain size}
	\label{fig:power_rad}
\end{subfigure}
\begin{subfigure}[b]{0.4\textwidth}
	\centering
	\includegraphics[scale=0.35,valign=t]{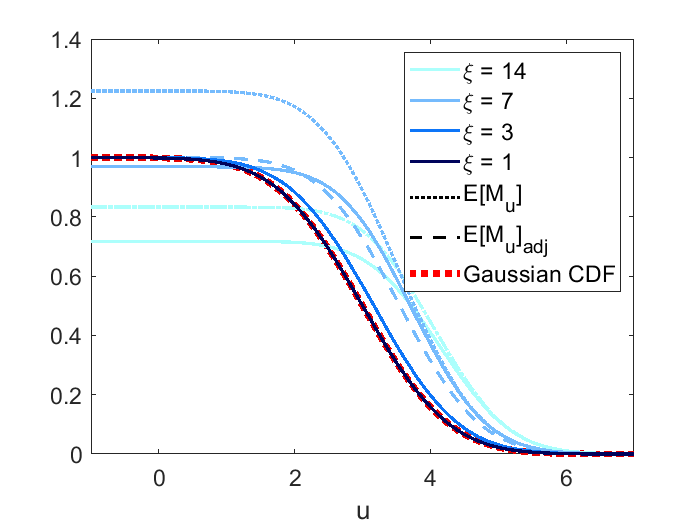}
	\caption{Sharp signal}
	\label{fig:power_xi}
\end{subfigure}
\caption{2D simulation: Power approximation using $E[M_u]$ under four different scenarios (scenario 3 is displayed in all three panels) when the mean function is quadratic.}
\label{fig:power}
\end{figure}

\subsection{Constant mean function}

When the mean function $\theta(s)$ is constant, i.e. it does not depend on location $s$, $X(s)$ reduces to a centered isotropic Gaussian random field. Within the context of this paper, $\theta(s) = 0$ can be seen as the null hypothesis and the power function becomes the probability of Type I error. We use the peak height distribution when $\theta(s) = 0$ (\citealp{Bernoulli}) to decide the cutoff point such that the test meets the nominal type I error. The simulation result when $\theta(s) = 0$ is displayed in Figure \ref{fig:power_constant}.

The performance of Type I error approximation when the mean function is 0 is similar to what we find when the mean function is quadratic (scenario 4 is ignored since the shape parameter does not exist when the mean function is constant). The conclusion is that under large $u$ (which is guaranteed in order to control the Type I error) or small domain, we have good Type I error approximation.

\begin{figure}
\centering
\begin{subfigure}[b]{0.4\textwidth}
	\centering
	\includegraphics[scale=0.35]{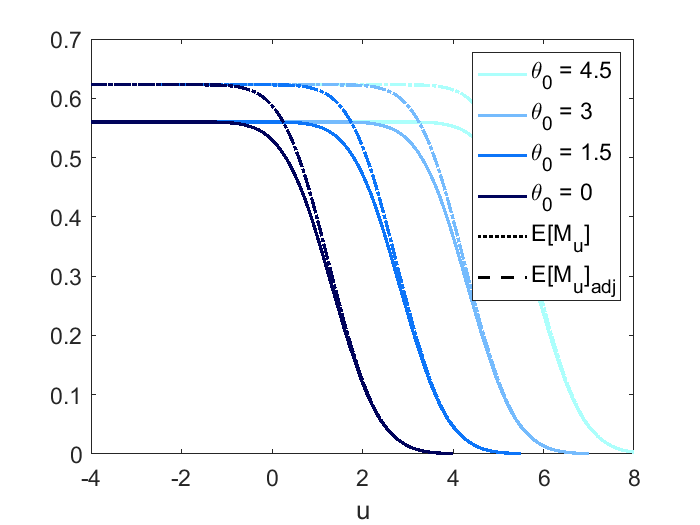}
	\caption{Height equivariance}
	\label{fig:power_constant_height}
\end{subfigure}
\begin{subfigure}[b]{0.4\textwidth}
	\centering
	\includegraphics[scale=0.35]{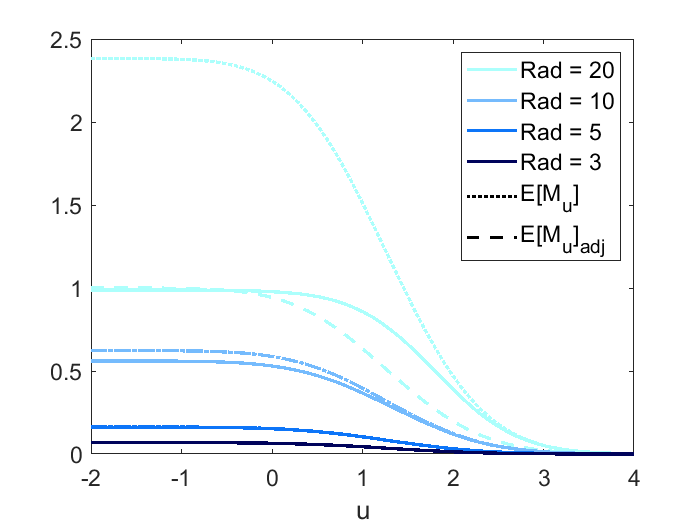}
	\caption{Small domain size}
	\label{fig:power_constant_rad}
\end{subfigure}
\caption{2D simulation: Type I error approximation using $E[M_u]$.}
\label{fig:power_constant}
\end{figure}

\subsection{Gaussian mean function}

The simulation results under Gaussian mean are displayed in Figure \ref{fig:power_gaussian}. For scenario 2 and 3, the results are consistent with those under quadratic mean. For scenario 1, since $\theta_0$ controls both the signal height and sharpness, the power,  $\E[M_u]$ and adjusted $\E[M_u]$ are no longer equivariant in terms of $\theta_0$. For scenario 4, if we look at Figure \ref{fig:power_gaussian_xi} with Figure \ref{fig:power_gaussian_xi_rad_L}, it can be seen that as the signal becomes sharper, the power, $\E[M_u]$ and adjusted $\E[M_u]$ curve converges to the Gaussian CDF only when the domain size (quantified by Rad($D$)) is small. In this case, the asymptotic curve is a mixture of Gaussian CDF and $\E[M_u]$ under constant mean. This is due to the shape of Gaussian density as it converges to 0 if we expand the domain.

In conclusion, for Gaussian mean function, we recommend applying our method to approximate power only when the domain size is small. 

\begin{figure}
\centering
\begin{subfigure}[b]{0.4\textwidth}
	\centering
	\includegraphics[scale=0.35]{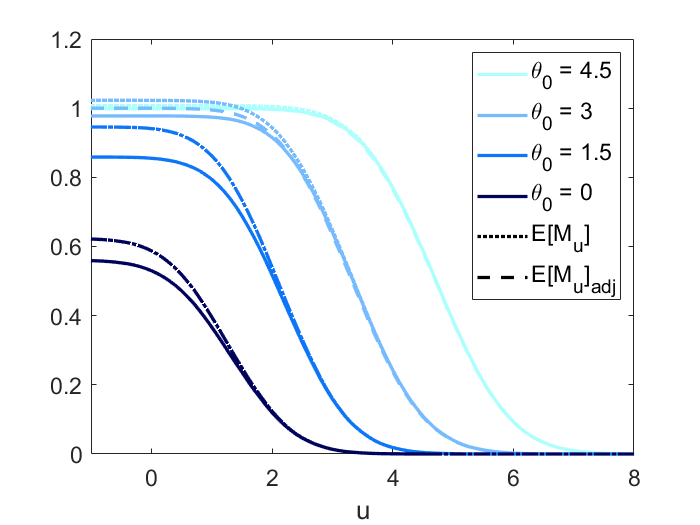}
	\caption{Height equivariance}
	\label{fig:power_gaussian_height}
\end{subfigure}
\begin{subfigure}[b]{0.4\textwidth}
	\centering
	\includegraphics[scale=0.35]{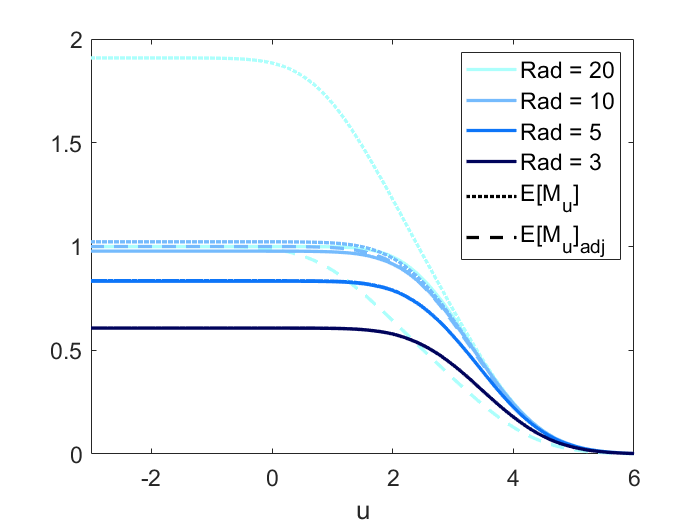}
	\caption{Small domain size}
	\label{fig:power_gaussian_rad}
\end{subfigure}
\begin{subfigure}[b]{0.4\textwidth}
	\centering
	\includegraphics[scale=0.35]{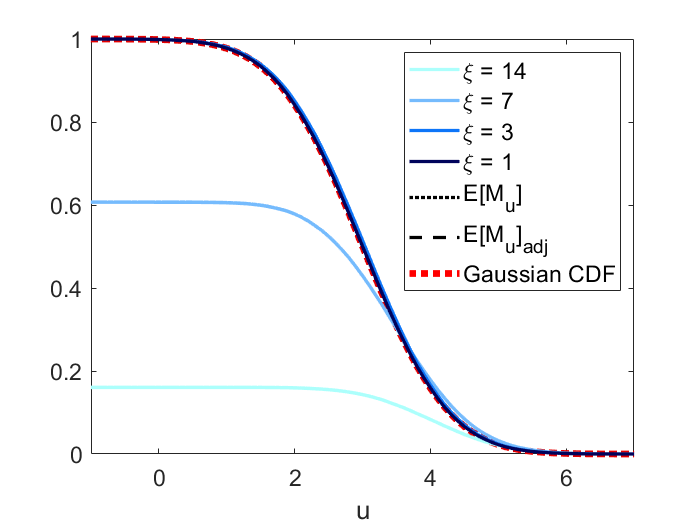}
	\caption{Sharp signal ($\Rad{(D)} = 3$)}
	\label{fig:power_gaussian_xi}
\end{subfigure}
\begin{subfigure}[b]{0.4\textwidth}
    \centering
    \includegraphics[scale=0.35]{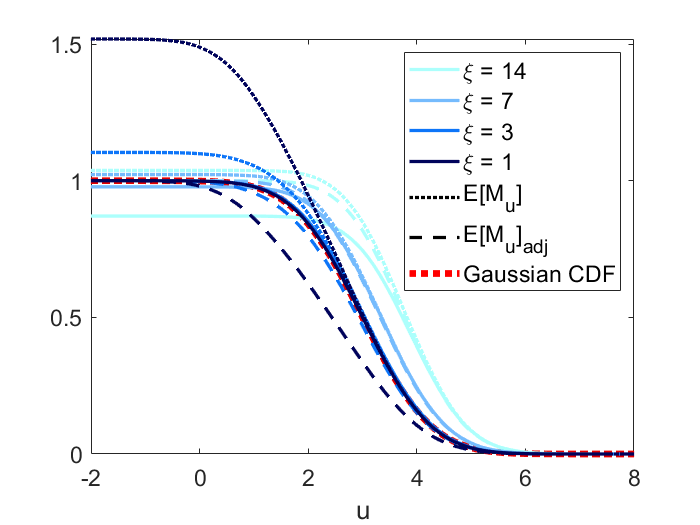}
    \caption{Sharp signal ($\Rad{(D)} = 20$)}
    \label{fig:power_gaussian_xi_rad_L}
\end{subfigure}
\caption{2D simulation: Power approximation using $\E[M_u]$ when the mean function is Gaussian.}
\label{fig:power_gaussian}
\end{figure}


\section{Estimation from data}\label{sec:application}

To use our power approximation formula in real peak detection problems, we need to estimate the spatial covariance function of the noise as well as the mean function from the data. In this section, we demonstrate the 3D application setting and how to estimate the noise spatial covariance function and the mean function. Consider an imaging dataset with $n$ subjects, and let $Y_i(s)$ represent the signal plus noise for subject $i$ 

\begin{equation*}
Y_i(s) = \mu(s) + \sigma(s) \varepsilon_i(s)
\end{equation*}
where $s = (s_1, s_2, s_3)' \in \mathbb{R}^3$, the signal $\mu(s)$, standard deviation $\sigma(s)$ and noise $\varepsilon(s)$ are assumed to be smooth $C^3$ functions. If we compute the standardized mean of all $n$ subjects, we will get a standardized random field 

\begin{equation}
X(s)  = \bar{Y}(s) / \text{SE}(\bar{Y}(s)) = \sqrt{n}\mu(s)/\sigma(s) + \sqrt{n}\bar{\varepsilon}(s)
\label{eqn:T_s}
\end{equation}

This standardized random field $X(s)$ has constant standard deviation of 1. We can treat $\sqrt{n}\mu(s)/\sigma(s)$ as the new signal and $\sqrt{n}\bar{\varepsilon}(s)$ as the new noise of the standardized field. Let $\theta(s) = \sqrt{n}\mu(s)/\sigma(s)$ and $Z(s) = \sqrt{n}\bar{\varepsilon}(s)$. We propose using the following method to estimate the new signal and noise.

\subsection{Estimation of the noise spatial covariance function}\label{sec:kernel_estimate}

We consider the noise $Z(s)$ to be constructed by convolving Gaussian white noise with a kernel:

\begin{equation}
Z(s) = \int_{\mathbb{R}^N} K(t-s) d B(t)
\label{eqn:noise_model}
\end{equation}
where $K(\cdot)$ is a $N$ dimensional kernel function, and $dB(s)$ is Gaussian white noise. Assume that the kernel is rotationally symmetric so that the noise $Z(s)$ is isotropic. Under model (\ref{eqn:noise_model}), we would be able to simulate the noise if we were able to estimate the kernel function from the data.

It can be shown that the autocorrelation of $Z(s)$ is the convolution of the kernel with itself:
\begin{equation*}
\Cor(Z(s),Z(s')) = \int_{\mathbb{R}^N} K(t-s) K(t-s') dt = \int_{\mathbb{R}^N} K(t-(s-s')) K(t) dt.
\end{equation*}

By the convolution theorem, convolution in the original domain equals point-wise multiplication in the Fourier-transformed domain. Thus the kernel function can be estimated empirically using the following method:

\begin{enumerate}
  \item Determine a location $s_0$ of interest (e.g. center of the peak), and calculate the empirical correlation vectors between $Y(s_0)$ and $Y(s)$ where $s$ lies on the three orthogonal axes centered at $s_0$, and belongs to a subdomain of interest.
  \item Take the average of the three estimated correlation vectors (forcing the noise to be isotropic) and perform Fourier transform.
  \item Take the square root of the Fourier coefficients, then the estimated kernel function can be obtained by performing the inverse Fourier transform.
\end{enumerate}

\subsection{Estimation of the mean function}

Our explicit formulas are derived assuming the Hessian of the mean function is a constant times the identity matrix. Therefore, we aim to find a rotational symmetric paraboloid $\hat{\theta}(s)$ that best represents the mean function: 
\begin{equation}
\theta(s) = \beta_0 + \beta_1||s||^2 + (\beta_2, \beta_3, \beta_4)\cdot s
\label{eqn:mean_estimate}
\end{equation}
where the dot represents the vector inner product in $\mathbb{R}^3$. Note that all the quadratic terms share the same coefficient which is due to the rotational symmetry. To estimate (\ref{eqn:mean_estimate}), we can fit a linear regression using all $X(s)$ within the subdomain as outcome.

\section{A 3D real data example}\label{sec:3d_example}

As an application, we illustrate the methods in a group analysis of fMRI data from the Human Connectome Project (HCP) (\Citealp{HCP2012}). The data is used here to get realistic 3D signal and noise parameters from which to do 3D simulations as well as evaluate the performance of our formulas in power approximation. The dataset contains fMRI brain scans for 80 subjects. For each subject, the size of the fMRI image is  $91\times109\times 91$ voxels. The mean and standard deviation of the data are displayed in Figure \ref{fig:hcp_mu_mean} and \ref{fig:hcp_mu_sd}.

\begin{figure}
\centering
\begin{subfigure}[b]{0.4\textwidth}
	\centering
	\includegraphics[scale=0.35]{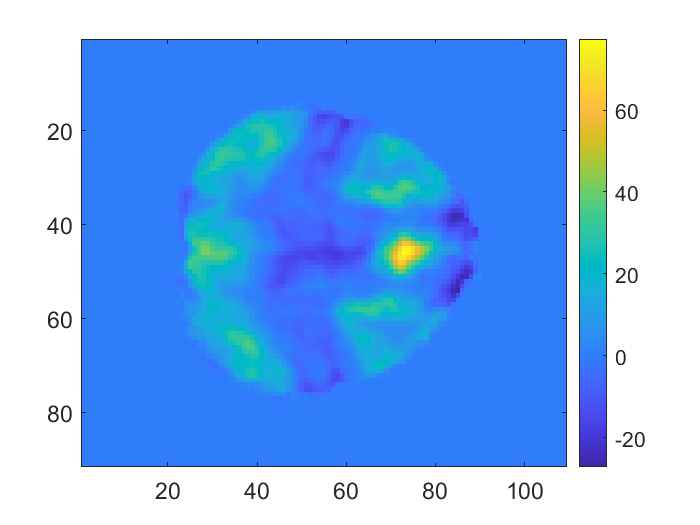}
	\caption{Mean of the data}
	\label{fig:hcp_mu_mean}
\end{subfigure}
\begin{subfigure}[b]{0.4\textwidth}
	\centering
	\includegraphics[scale=0.35]{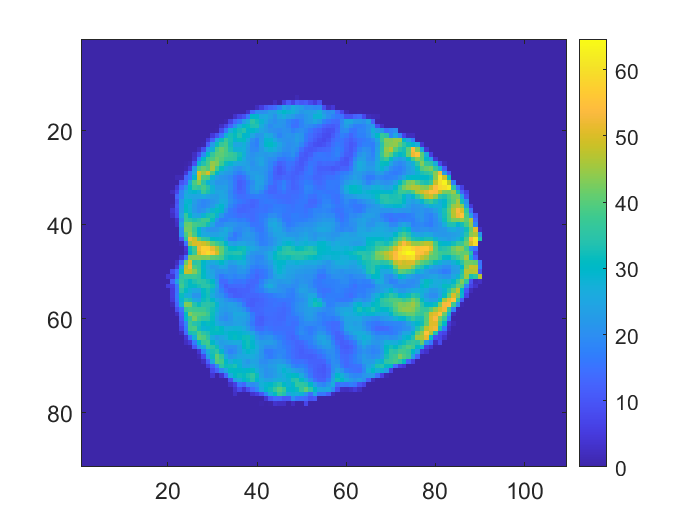}
	\caption{Standard deviation of the data}
	\label{fig:hcp_mu_sd}
\end{subfigure}
\begin{subfigure}[b]{0.4\textwidth}
	\centering
	\includegraphics[scale=0.35]{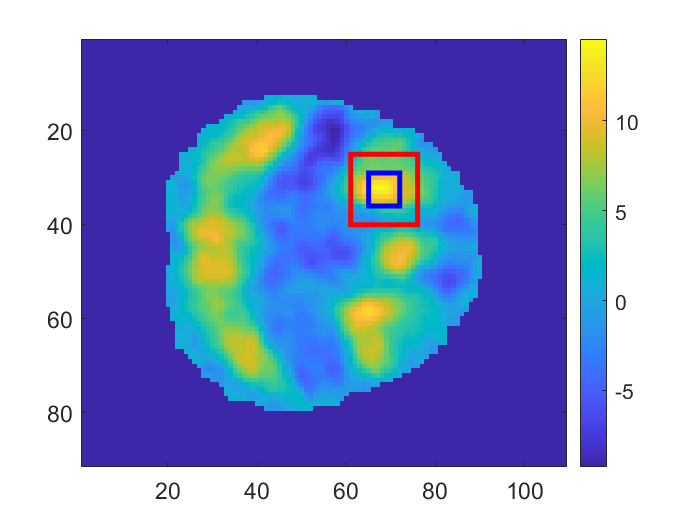}
	\caption{Standardized mean of the smoothed data}
	\label{fig:hcp_mu_mean_smooth}
\end{subfigure}
\caption{HCP data: Mean, standard deviation of the data, and standardized mean of the smoothed data (transverse sliced at the peak of the image along the third dimension). The blue box represents the subdomain of the peak and the red box represents the subdomain we use to estimate the noise spatial covariance function.}
\label{fig:hcp_mu}
\end{figure}

\subsection{Data preprocessing}

Gaussian kernel smoothing is applied to the dataset to make the mean function unimodal around the peak and increase the signal-to-noise ratio. The standard deviation of the smoothing kernel we use in this example is 1 voxel which translates to full width at half maximum (FWHM) being around 2.235. It is obvious from Figure \ref{fig:hcp_mu_sd} that the standard deviation of the noise is not a constant for different locations, thus we use the transformation described in Section \ref{sec:application} to standardize the smoothed data before analyzing it. The standardized mean of the smoothed data $X(s)$ is displayed in Figure \ref{fig:hcp_mu_mean_smooth}.

\subsection{Estimation of the autocorrelation and mean functions}

After standardizing the data, our next step is to estimate the mean and kernel functions using the methods described in Section \ref{sec:application}. Here a $15 \times 15 \times 15$ subdomain (the red box in Figure \ref{fig:hcp_mu_mean_smooth}) around the peak is taken to estimate the kernel. Since we assume the noise is isotropic, the correlation around the peak is supposed to be strictly symmetric along any dimension. However, this is not always true in real data. To tackle this, for each of the three dimensions, we first save the correlation $\Cor(X(s), X(s_0))$ around the peak $s_0$ as a vector and create a new vector by flipping the saved correlation vector. Then we take the mean of the two vectors so that it is guaranteed to be symmetric. The empirical correlation after such symmetrization and the corresponding estimated kernel function are displayed in Figure \ref{fig:hcp_mu_correlation}. 

\begin{figure}
\centering
\includegraphics[scale=0.35]{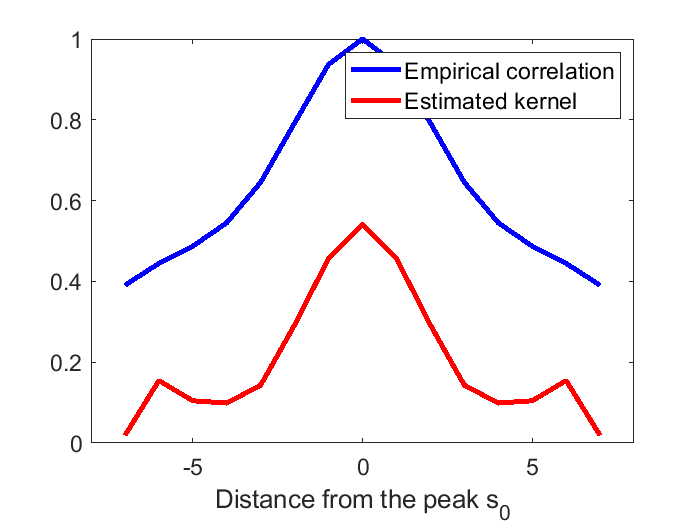}
\caption{The empirical correlation after symmetrization and the estimated kernel from a subdomain of HCP data.}
\label{fig:hcp_mu_correlation}
\end{figure}

We consider two approaches to estimate the mean function, nonparametric and parametric. The nonparametric mean estimation is obtained as a voxelwise average over subjects. 
\begin{equation}\label{eqn:mu_nonpara}
\hat{\theta}(s) = \sum_{i=1}^{n}X_i(s) = \bar{X}(s)
\end{equation}
The parametric mean estimation is obtained by fitting a linear regression model \eqref{eqn:mean_estimate} using all observed data $X(s)$ within the subdomain of size $6 \times 6 \times 6$ (the blue box in Figure \ref{fig:hcp_mu_mean_smooth}) as outcome and their corresponding location variables $||s||^2$, $s$ as covariates. The least square estimate of the mean is
\begin{equation}
\hat{\theta}(s) = 13.03 - 0.26||s||^2 + (0.20, 0.11, 0.39)\cdot s
\label{eqn:mu_hat}
\end{equation}
We will compare the difference in simulated power and $\E[M_u]$ when the mean function is estimated by the nonparametric approach \eqref{eqn:mu_nonpara} vs the parametric approach \eqref{eqn:mu_hat}.

\subsection{3D Simulation induced by data}\label{sec:hcp_sim}

We have done several simulation studies under a well-designed 2D setting where the formulas are supposed to work well, but eventually, we want to apply the formulas to real-life data which is more complicated. Besides, in terms of fMRI data analysis, the image is always 3D by nature. Considering all these factors, a 3D simulation study induced by real data is necessary to validate the performance of the formulas under a more realistic setting, 

In the previous two subsections, we have studied the signal and noise of the HCP data. For the simulation, we would like to generate 3D images using the estimated mean and kernel function. The noise field is generated by convolving the estimated kernel (displayed in Figure \ref{fig:hcp_mu_correlation}) and Gaussian white noise. For each simulation setting, 10,000 such noise fields are generated. 

The signal from the standardized data is very strong (see Figure \ref{fig:hcp_mu_mean_smooth}). For illustrative purposes, we choose to weaken the signal by scaling down the estimated mean function \eqref{eqn:mu_hat} while maintaining the same shape. Here signal strength is measured by effect size, and the amount of scaling is determined by different levels of effect size. In traditional t-test or z-test, Cohen's d values of 0.2, 0.5, and 0.8 (corresponding to 0.58 th 0.69 th and 0.79 th quantiles of the standard Gaussian distribution) are considered as small, medium, and large effect sizes (\citealp{Cohen1988}). The peak height distribution under the null hypothesis (zero mean function) is displayed in Figure \ref{fig:null_distribution}, and does not follow a Gaussian distribution. Therefore, we take 0.58 th, 0.69 th and 0.79 th quantile of the null distribution minus the mean as small (0.16), medium (0.40), and large (0.65) effect size (see the black dash-dot lines in Figure \ref{fig:null_distribution}). For simplicity, we see the peak height of the mean function as effect size in this simulation. However, this is not the most accurate way of defining effect size in the peak detection setting. More details will be discussed in \ref{sec:effect_size}. The threshold $u$ for peak detection is chosen as the 0.99 th quantile of the peak height distribution under the null ($\approx 3.42$) according to \citet{Annals} (see the red dashed line in Figure \ref{fig:null_distribution}). 

\begin{figure}
\centering
\includegraphics[scale=0.35]{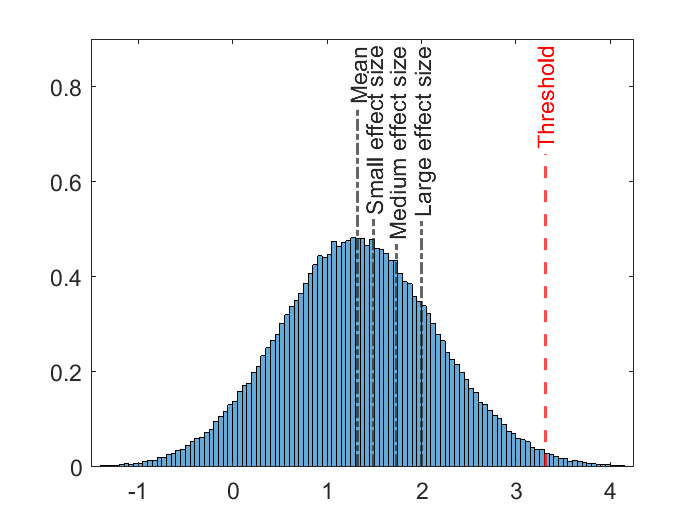}
\caption{3D Simulation induced by data: Simulated Peak height distribution under the null (zero mean) with different levels of effect sizes and threshold $u$.}
\label{fig:null_distribution}
\end{figure}

Similar to the 2D simulation, the search domain $D$ is assumed to be a sphere centered at the true peak, and we use radius of $D$ to control the domain size. Signal sharpness is fixed since it is estimated from the data. The empirical power and $\E[M_u]$ are computed using \eqref{eqn:empirical_power} and \eqref{eqn:empirical_mu}.

To derive the explicit formulas for $\E[M_u]$, we assume the mean function to be a rotational symmetric paraboloid, and this assumption might cause some bias in applications. In Figure \ref{fig:hcp_mu_data}, we demonstrate the difference in simulated power and $\E[M_u]$ when the mean function is estimated by the nonparametric approach \eqref{eqn:mu_nonpara} vs the parametric approach \eqref{eqn:mu_hat}. It can be observed that the quadratic approximation only has a small impact on power and $\E[M_u]$ in this example. In Figure \ref{fig:hcp_power_sim}, we validate our theoretical formula for $\E[M_u]$ \eqref{eqn:M_u} as well as the adjusted $\E[M_u]$. As we can see, the theoretical curve for $\E[M_u]$ and adjusted $\E[M_u]$ closely mirrors the empirical curve. The figure also shows that the power approximation using $\E[M_u]$ is accurate for large $u$ as stipulated by Theorem \ref{thm:large_u}. Power curves using three different effect sizes, and comparisons between large and small domain sizes are displayed in Figure \ref{fig:hcp_power_curve}. We can see from the figure that the approximation works well for small and large sample sizes, and $\E[M_u]_{\adj}$ provides a conservative determination of the sample when $\E[M_u]$ exceeds 1. We can also observe that the performance of power approximation using $\E[M_u]$ becomes better if the domain size is smaller as stipulated by Theorem \ref{thm:small_domain}. 

\begin{figure}
\centering
\includegraphics[scale=0.35]{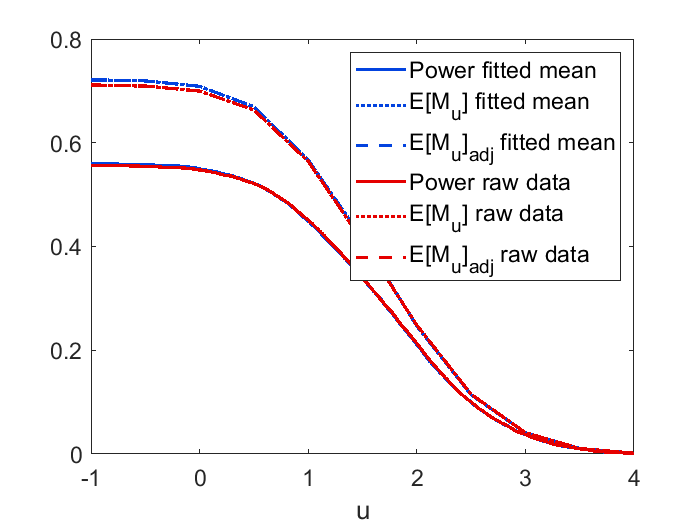}
\caption{3D Simulation induced by data ($\Rad(D) = 3$, medium effect size): Simulated power and $\E[M_u]$ when mean function is obtained from raw data vs quadratic estimation \eqref{eqn:mu_hat}. Here $\E[M_u]$ and adjusted $\E[M_u]$ are the same since $\E[M_{-\infty}] < 1$.}
\label{fig:hcp_mu_data}
\end{figure}

\begin{figure}
\centering
\includegraphics[scale=0.35]{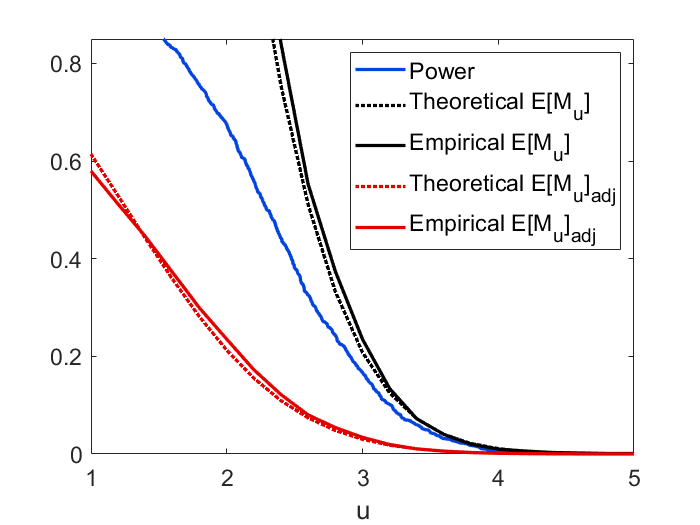}
\caption{3D Simulation induced by data ($\Rad(D) = 6$, medium effect size): Simulated vs theoretical $\E[M_u]$ and adjusted $\E[M_u]$.}
\label{fig:hcp_power_sim}
\end{figure} 



\begin{figure}
\centering
\begin{subfigure}[b]{0.3\textwidth}
	\centering
	\includegraphics[scale=0.29]{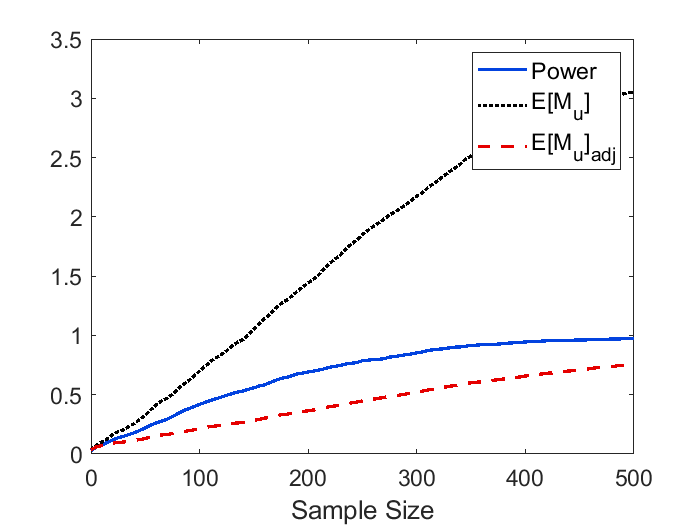}
	\caption{Small effect size (large domain size)}
	\label{fig:hcp_power_curve_small}
\end{subfigure}
\begin{subfigure}[b]{0.3\textwidth}
	\centering
	\includegraphics[scale=0.29]{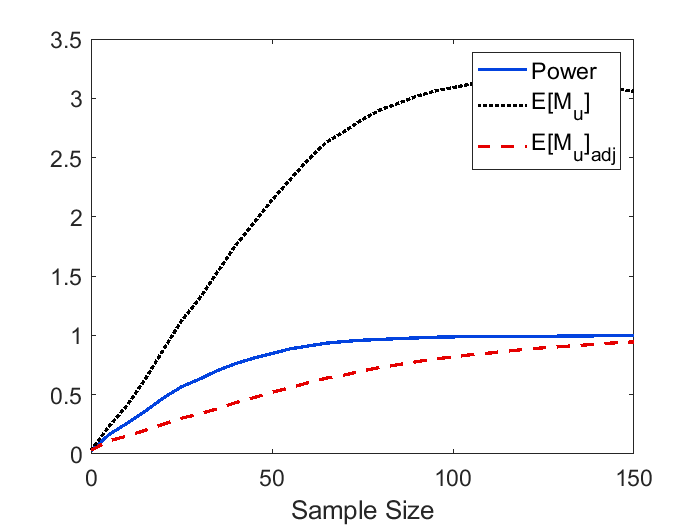}
	\caption{Medium effect size (large domain size)}
	\label{ffig:hcp_power_curve_medium}
\end{subfigure}
\begin{subfigure}[b]{0.3\textwidth}
	\centering
	\includegraphics[scale=0.29]{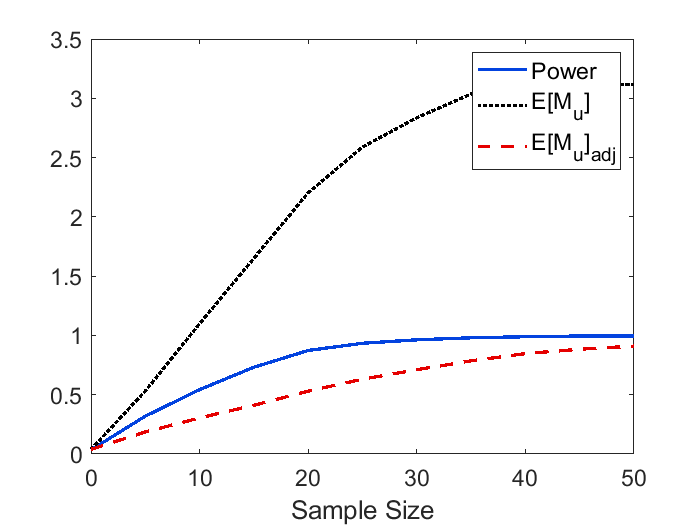}
	\caption{Large effect size (large domain size)}
	\label{fig:hcp_power_curve_large}
\end{subfigure}
\begin{subfigure}[b]{0.3\textwidth}
	\centering
	\includegraphics[scale=0.29]{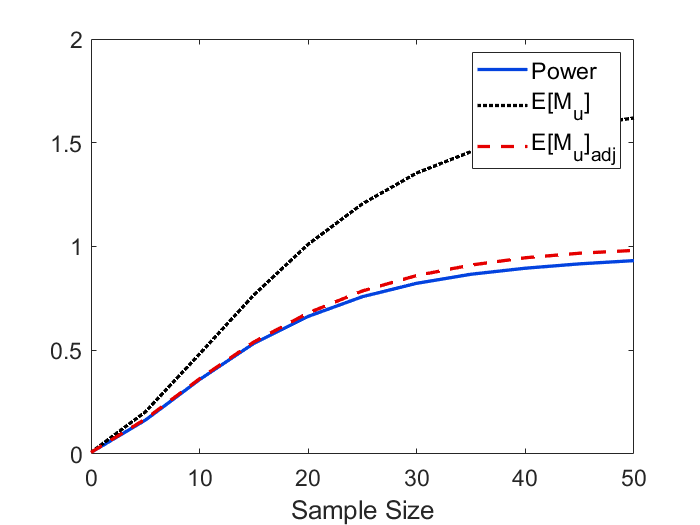}
	\caption{Small effect size (small domain size)}
	\label{fig:hcp_power_curve3_small}
\end{subfigure}
\begin{subfigure}[b]{0.3\textwidth}
	\centering
	\includegraphics[scale=0.29]{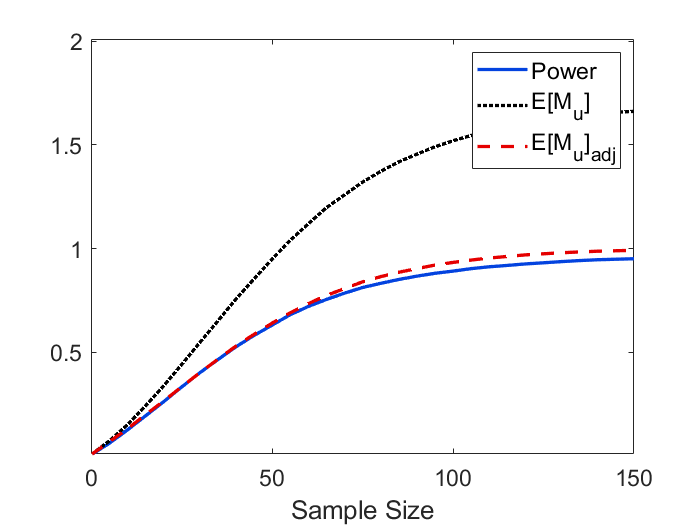}
	\caption{Medium effect size (small domain size)}
	\label{ffig:hcp_power_curve3_medium}
\end{subfigure}
\begin{subfigure}[b]{0.3\textwidth}
	\centering
	\includegraphics[scale=0.29]{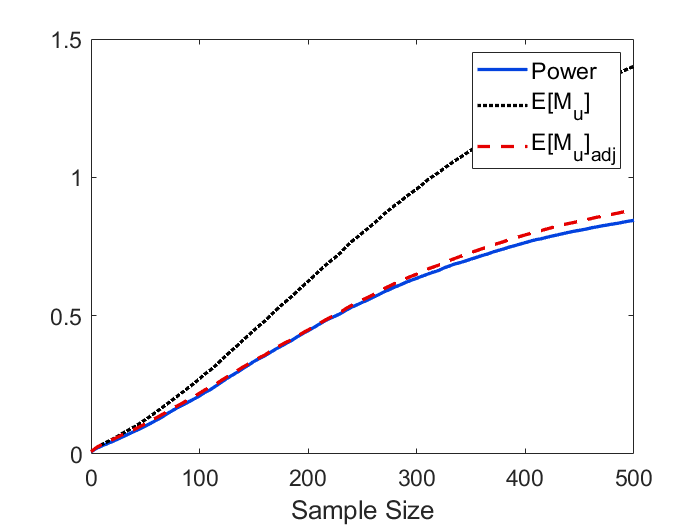}
	\caption{Large effect size (small domain size)}
	\label{fig:hcp_power_curve3_large}
\end{subfigure}
\caption{3D Simulation induced by data: Power curves when the signal has small, medium, and large effect size, and comparisons between large and small domain size.}
\label{fig:hcp_power_curve}
\end{figure}


\section{Discussion}\label{sec:discussion}

\subsection{Explicit formulas and approximations}


Calculating power for peak detection \eqref{eqn:power_def} has been a difficult problem in random field theory due to the lack of formula that can compute it directly. In this paper, we have discussed the rationale of using $\E[M_u]$ and $\E[M_u]_{\adj}$ to approximate peak detection power under different scenarios and derived formulas to compute $\E[M_u]$ assuming isotropy. Isotropy is assumed so that we are able to use the GOI matrix (\citealp{Bernoulli}) as a tool to calculate $\E[M_u]$ via the Kac-Rice formula.

We also showed explicit formulas for $H(\tilde{x})$ (defined as \eqref{eqn:H_tilde}) when $N = 1,2,3$ assuming the mean function is a paraboloid. Computing $H(\tilde{x})$ involves applying the probability density function for the eigenvalues of GOI matrices and details can be found in the proof of Proposition \ref{prop:1d_explicit}, \ref{prop:2d_explicit} and \ref{prop:3d_explicit}. Then $\E[M_u]$ can be calculated by plugging $H(\tilde{x})$ to \eqref{eqn:M_u}. The integration in \eqref{eqn:M_u}, however, can not be evaluated explicitly. In practice, one may evaluate it numerically. For higher dimensions ($N>3$), it remains difficult to get an explicit form of $H(\tilde{x})$ due to the fact inferred by Proposition \ref{prop:1d_explicit}, \ref{prop:2d_explicit} and \ref{prop:3d_explicit} that the integration becomes extremely complicated as $N$ becomes large.


\subsection{Effect size}\label{sec:effect_size}

 We want to emphasize that the power depends on both the signal strength parameter $\theta_0$ and shape parameter $\eta$. In a traditional z-test or t-test which tests a single null hypothesis that the mean value equal to 0, the detection power depends only on a single parameter we call effect size. Here the test is conditional on the point being a local maximum. Applying a simple z-test or t-test, one could reject the null hypothesis as long as the peak height $\theta_0$ exceeds the pre-specified threshold. This approach is not accurate since the peak height does not follow a Gaussian or t distribution. To address this, the threshold can be determined by the null distribution of peak height (\citealp{Bernoulli}) to control the type I error at a nominal level. However, power calculation based on the test over peak height is still biased since the true effect size depends both on the signal height and curvature. The height of the peak affects the likelihood of exceeding the threshold and the curvature affects the likelihood of existing such peak in the domain. It follows that a sharp and high peak is easier to detect compared to a flat and low peak, leading to a larger detection power. 

For an interpretation of the parameter $\eta = \theta''/(-2\rho')$, we consider two types of mean function: paraboloid and Gaussian. Suppose the noise is the result of the convolution of white noise with a  Gaussian kernel with spatial std. dev. $\nu$ resulting in the covariance function with $\rho(r) = \exp(-r/(2\nu^2))$ as specified in Section \ref{sec:iso_Gfield}. This is the same noise as we simulated in Section \ref{sec:application}. When the mean function is paraboloid, consider $\theta(s) = -\|s\|^2/(2\xi^2)+\theta_0$ as in \eqref{eqn:mean_2d}. Here we obtain $\theta'' = -1/\xi^2$ and $\rho' = -1/(2\nu^2)$, yielding $\eta = \theta''/(-2\rho') = -\nu^2/\xi^2$. Thus, $\eta$ is a shape parameter representing the relative sharpness of the mean function with respect to the curvature of the noise. When the mean function is Gaussian, consider $\theta(s) = a \exp(-\|s\|^2/(2\tau^2))$. This expression is obtained, for example, if the signal is the result of the convolution of a delta function with a Gaussian kernel with spatial std. dev. $\tau$. We obtain $\theta'' = -a/\tau^2$ and $\rho' = -1/(2\nu^2)$, yielding $\eta = \theta''/(-2\rho') = -a \nu^2/\tau^2$. Thus, $\eta$ is the height of the signal a, scaled by the ratio of the spatial extent of the noise and signal filters. In both cases, the parameter $\eta$, and thus the power, are invariant under isotropic scaling of the domain, in a similar fashion to the peak height distribution under the null hypothesis (\citealp{CHENG2020108672}).
\begin{figure}
\centering
\begin{subfigure}[b]{0.4\textwidth}
	\centering
	\includegraphics[scale=0.35,valign=t]{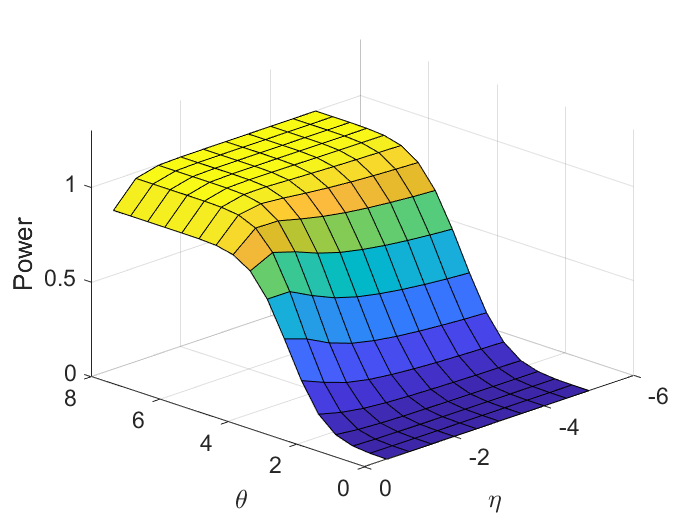}
	\caption{Power}
	\label{fig:Power_eta_theta}
\end{subfigure}
\begin{subfigure}[b]{0.4\textwidth}
	\centering
	\includegraphics[scale=0.35,valign=t]{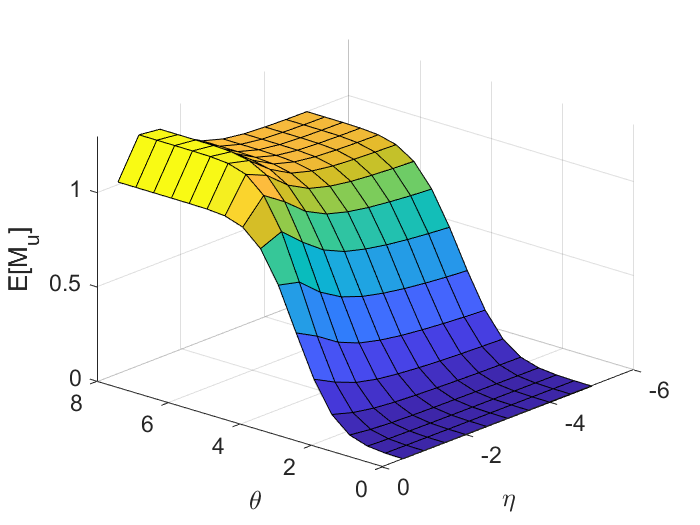}
	\caption{$\E[M_u]$}
	\label{fig:Mu_eta_theta}
\end{subfigure}
\caption{2D simulation: Power and $\E[M_u]$ for different $\theta$ and $\eta$ ($u = 3.92$ and ${\rm Rad}(D) = 10$).}
\label{fig:power_eta_theta}
\end{figure}

Figure \ref{fig:power_eta_theta} illustrates how $\theta_0$ and $\eta$ affect power and $\E[{M_u}]$ in the 2D simulation described in Section \ref{sec:sim}. As we have explained, $\theta_0$ and $\eta$ together determine the effect size. Although deriving an explicit form of effect size as a function of $\theta_0$ and $\eta$ is difficult, we are able to roughly show how the two parameters relate to power. $\theta_0$ which can be seen as signal-to-noise ratio (SNR) plays a major role. Having $\eta$ stay the same, the power monotonically increases with respect to $\theta_0$. On the other hand, power monotonically decreases with respect to $\eta$ having $\theta_0$ stays the same. In this simulation example, the impact of $\theta_0$ on power is about 10 times stronger than $\eta$ if we fit a linear model of power using $\theta_0$ and $\eta$. We can also observe from the figure that the effect of $\eta$ on power is stronger for large $\theta_0$ compared to that for small $\theta_0$.



\subsection{Application to data}


To use our formula to calculate power in practice, one needs to assume the peak to be a certain type such as paraboloid or Gaussian. However, sometimes it might not be plausible to make such assumptions, leading to inaccurate power estimate.

Regarding the conjecture of $\E[M_u]_{\adj}$ being a lower bound when there exists at least one local maximum in the domain $D$, it remains difficult to prove in general, but as we showed in the real data example, it seems to be correct in practice. When it comes to a real-life problem, we can take both $\E[M_u]$ and the $\E[M_u]_{\adj}$ into consideration to get a better understanding of the true sample size. We suggest using $\E{[M_u]}$ as an approximation to power when the sample size is small, considering $\E[M_u]_{\adj}$ when the sample size is large. $\E[M_u]_{\adj}$ also gives a more conservative estimate of power compared to $\E[M_u]$ which is useful to guarantee that the test is powerful enough when we design future studies. Because of its difficulty, we leave further study of $\E[M_u]_{\adj}$ for future work.

\section{Acknowledgments}

Y.Z., D.C. and A.S. were partially supported by NIH grant R01EB026859 and NSF grant 1811659. Data were provided in part by the Human Connectome Project, WU-Minn Consortium (Principal Investigators: David Van Essen and Kamil Ugurbil; 1U54MH091657) funded by the 16 NIH Institutes and Centers that support the NIH Blueprint for Neuroscience Research; and by the McDonnell Center for Systems Neuroscience at Washington University.

\section{Appendix}

\bibliographystyle{myjmva}
\bibliography{ref}

\end{document}